%% file: main_arxiv.tex
\documentclass[10pt,twocolumn,twoside]{IEEEtran}

\usepackage{cite}
\usepackage{amsmath,amssymb,amsfonts}
\usepackage{booktabs}
\usepackage{graphicx}
\usepackage{textcomp}
\usepackage{balance}
\usepackage{xcolor}
\usepackage[utf8]{inputenc}
\usepackage{float}
\usepackage{comment}

\usepackage{algorithm}
\usepackage{algorithmic}
\usepackage{framed}
\usepackage[english]{babel}
\usepackage{amsthm}
\usepackage{blindtext}
\usepackage{hyperref}
\usepackage[all]{xy}
\usepackage{tikz}
\usepackage{mdframed}
\usepackage{enumitem}
\usetikzlibrary{arrows}
\usetikzlibrary{positioning,fit,arrows.meta}
\usetikzlibrary{shapes.geometric, shadows, calc}
\usetikzlibrary{matrix, backgrounds, fit, patterns}

\usepackage{titlesec}
\titlespacing{\subsection}
{0pt}    
{0.5ex}  
{1pt}    

\tikzset{
    block/.style = {draw, thick, minimum height=2em, minimum width=3em},
    sum/.style = {draw, circle, node distance=2cm, inner sep=2pt},
    input/.style = {coordinate},
    output/.style = {coordinate},
}

\definecolor{solarPVColor}{RGB}{217, 95, 2}
\definecolor{WindColor}{RGB}{117, 112, 179} %
\definecolor{BESSColor}{RGB}{231, 41, 138}   

\definecolor{orgColor}{RGB}{27, 133, 115}
\definecolor{priceColor}{RGB}{166, 118, 29} %
\definecolor{payColor}{RGB}{50, 160, 50} 
\definecolor{mplgreen}{rgb}{0.173, 0.627, 0.173}
\definecolor{mplred}{rgb}{0.839, 0.153, 0.157}

\definecolor{pinkreward}{RGB}{231, 41, 138}

\newtheoremstyle{compactstyle}
  {3pt}                
  {3pt}                
  {\em}        
  {}                   
  {\bfseries}          
  {.}                  
  {.5em}               
  {}                   

\theoremstyle{compactstyle}
\newtheorem{theorem}{Theorem}

\newtheorem{proposition}[theorem]{Proposition}
\newtheorem{definition}{Definition}[section]

\makeatletter
\renewenvironment{proof}[1][\proofname]{\par
  \pushQED{\qed}%
  \normalfont \topsep0pt \partopsep0pt 
  \trivlist
  \item[\hskip\labelsep
        \itshape
    #1\@addpunct{.}]\ignorespaces
}{%
  \popQED\endtrivlist\@endpefalse
}
\makeatother

\newcommand{\hlight}[1]{%
    \tikz[baseline=(t.base)] {%
        \node (t) [%
            fill=payColor,      
            fill opacity=0.3,   
            text opacity=1,     
            inner sep=2pt,      
            rounded corners=2pt, 
            anchor=base         
        ] {#1};%
    }%
}

\def\BibTeX{{\rm B\kern-.05em{\sc i\kern-.025em b}\kern-.08em
    T\kern-.1667em\lower.7ex\hbox{E}\kern-.125emX}}
\usepackage{lipsum}

\title{\LARGE \bf 
A Coalitional Stable and Fair Reward Allocation for Dynamic Virtual Power Plants}

\author{Carl von Holly-Ponientzietz,  Saverio Bolognani, Florian Dörfler, Verena Häberle%
\thanks{ 
The authors are with the Automatic Control Laboratory, ETH Zurich, 8092 Zurich, Switzerland. Corresponding Email: {\tt verenhae@ethz.ch}}%
\thanks{This work was supported by the Swiss Federal Office of Energy (grant SI/502734 MAESTRO).}%
}

\begin{document}
\begingroup
\allowdisplaybreaks

\maketitle

\begin{abstract}
This paper establishes crucial cooperation criteria for the operation of Dynamic Virtual Power Plants (DVPPs). 
We propose a control design and reward allocation mechanism to enable and incentivize Distributed Energy Resources (DERs) to provide dynamic ancillary services (DAS). Our results illustrate how the cooperative aggregation of heterogeneous DERs leverages technical complementarities to outperform standalone DAS provision.
The proposed reward allocation fulfills critical game-theoretic criteria, including individual rationality, coalitional stability, incentive compatibility, optimality, fairness and ex-post consistency. The control design and reward allocation are validated using a case study based on the Finnish power grid.
\end{abstract}

\begin{IEEEkeywords}
Dynamic Ancillary Services, Cooperative Game Theory, Distributed Energy Resources, Reward Allocation
\end{IEEEkeywords}

\section{Introduction}
The global transition toward renewable energy, driven by climate change mitigation and cost-competitiveness, has resulted in a proliferation of Distributed Energy Resources (DERs). However, the high penetration of these resources creates significant technical and economic challenges. From a technical perspective, the accelerated dynamics of converter-dominated grids requires the deployment of fast ancillary services \cite{arana2020fast}, \cite{pant2024ancillary}. Simultaneously, the economic viability of DERs is increasingly pressured by market saturation. Solar PV and wind assets, for instance, face diminishing returns due to the ``three Cs'' (curtailment, congestion, and price cannibalization) compelling operators to explore alternative revenue streams \cite{curtailment_power_system_planning, wind_cannabilization, nordics_cannabilization}.

Recent work proposes to tackle these challenges by coordinating DERs as  \emph{dynamic virtual power plants (DVPPs)} \cite{DVPP_control}.
DVPPs can be considered as an extension of \emph{virtual power plants (VPPs)}: both VPPs and DVPPs are a collection of DERs, but the DVPP is specifically designed to provide \emph{(fast) dynamic ancillary services (DAS)} while VPPs are often used for demand response \cite{demand_response_vpp}, 
energy arbitrage \cite{vpp_arbitrage}, and power setpoint tracking \cite{vpp_reslience_applications}.
Integrating DERs into a DVPP enables the provision of vital DAS, which can no longer be provided by conventional synchronous generation and simultaneously enhance the economic viability of the participating DERs.

Complementary to the DVPP control design, this paper utilizes cooperative game theory to allocate the reward among the DERs. Game theory has been extensively applied to issues in the power system, as introduced and outlined in \cite{abapour2020game}.
Previous work on VPP reward allocations typically combines a bidding strategy with a (non-)cooperative game-theoretic allocation mechanism. For instance, \cite{profit_allocation_Dabbagh_2015} proposes a risk-based bidding and allocation approach, \cite{Ryu2018GaussianRB} develops a convex game to ensure VPP cooperation in the wholesale market and \cite{bargaining_vpp} applies a \emph{Bargaining Solution} for a BESS-PV system in the ancillary service market. The work in \cite{vpp_allocation_review} reviews VPP optimization and cooperative methods in Section VI-D. 
It identifies the \emph{reward allocation} as the main challenge of cooperative approaches.
Crucially, the existing literature for DER cooperation to provide dynamic ancillary services (DAS) neither establishes suitable criteria for DER cooperation nor addresses the DAS market.

The main contributions of this paper are proposing a joint reward allocation and control design enabling heterogeneous DERs to provide DAS effectively. Our
approach satisfies important cooperative criteria: \emph{individual rationality, coalitional stability, efficiency, optimality, fairness and ex-post consistency}. Intuitively, these criteria imply that the technical capabilities of the DERs are used to their full extent \emph{(optimality)} and the DERs are willing to join and stay in the DVPP \emph{(individual rationality and coalitional stability)}. The formal definitions of the criteria are presented in Sections \ref{sec:problem_setup} and \ref{sec:results}. 

Section \ref{sec:dyn_anc_services} introduces the DAS market framework. The DVPP control, problem formulation, and reward allocation are detailed in Sections \ref{sec:dvpp_framework}--\ref{sec:results}. Finally, Section \ref{sec:case_study} presents a case study based on the Finnish power grid to exemplify the results, followed by the conclusion in Section \ref{sec:conclusion}.

\section{Dynamic Ancillary Services Provision}\label{sec:dyn_anc_services}

Ancillary service markets are generally organized centrally by the system operator. The main market design options, namely the \emph{procurement method}, the \emph{type of price} and the \emph{remuneration structure} \cite{rebours2007survey}, are visualized in Figure \ref{fig:market_summary_compact}.

\begin{figure}[tb]
    \centering
    \resizebox{\linewidth}{!}{%
    \begin{tikzpicture}[
        font=\sffamily,
        node distance=0.15cm,
        title/.style={
            rectangle, rounded corners, draw=black!60, fill=black!5, thick, 
            minimum height=0.6cm, minimum width=8.5cm, 
            align=center, font=\bfseries
        },
        header/.style={
            rectangle, rounded corners, text=white, font=\bfseries\small, 
            minimum height=0.5cm, minimum width=4.1cm, 
            align=center, anchor=north
        },
        box/.style={
            rectangle, draw=black!30, fill=white, rounded corners=2pt, 
            align=left, text width=3.9cm, 
            font=\scriptsize, inner sep=3pt, anchor=north
        },
        fullbox/.style={
            rectangle, draw=black!30, fill=white, rounded corners=2pt, 
            align=left, text width=8.3cm, 
            font=\scriptsize, inner sep=4pt, anchor=north
        }
    ]

    \node[title] (root) {Ancillary Services Market Architecture \cite{rebours2007survey}};

    \node[header, fill=orgColor, below=0.25cm of root.south west, anchor=north west] (org_head) {1. Procurement Method};
    
    \node[box, draw=orgColor, below=0.1cm of org_head] (org1) {
        \textbf{1) Compulsory Provision} \\
        \emph{Con: Ignores costs.}
    };
    \node[box, draw=orgColor, below=0.1cm of org1] (org2) {
        \textbf{2) Bilateral Contracts} \\
        \emph{Con: Low transparency.}
    };
    \node[box, draw=orgColor, below=0.1cm of org2, fill=orgColor, fill opacity=0.3, text opacity=1] (org3) {
        \textbf{3) Auctioning} \\
        Tendering process or spot market.\\
        \emph{Pro: Lowest marginal cost.} \cite{global_ffr_study}
    };

    \node[header, fill=priceColor, below=0.25cm of root.south east, anchor=north east] (price_head) {2. Type of Price};
    
    \node[box, draw=priceColor, below=0.1cm of price_head] (price1) {
        \textbf{1) Regulated Prices}\\
        Fixed by operator. \emph{Con: Inefficient.}
    };
    \node[box, draw=priceColor, below=0.1cm of price1] (price2) {
        \textbf{2) Pay-as-Bid (PAB)}\\
        Paid as submitted bid.
    };
    \node[box, draw=priceColor, below=0.1cm of price2, fill=priceColor, fill opacity=0.3, text opacity=1] (price3) {
        \textbf{3) Clearing / Marginal Pricing (MP)}\\
        \emph{Pro: More transparent and competitive than PAB.} \cite{rancilio2022ancillary}
    };

    \coordinate (lowest_point) at (0,0 |- org3.south); 
    
    \node[header, fill=payColor, minimum width=8.5cm, below=0.15cm of lowest_point, anchor=north] (pay_head) at (root.south |- lowest_point) {3. Remuneration Structure};
    
    \node[fullbox, draw=payColor, below=0.1cm of pay_head] (pay1) {
        \textbf{Possible Payments:} Fixed-allowances $\cdot$ \hlight{Capacity payment} $\cdot$ Utilization payment $\cdot$ Utilization frequency price $\cdot$ Opportunity cost
    };

    \draw[thick, black!50] (root.south) -- +(0,-0.2) -| (org_head.north);
    \draw[thick, black!50] (root.south) -- +(0,-0.2) -| (price_head.north);
    \draw[thick, black!50] (root.south) -- (root.south |- pay_head.north);

    \end{tikzpicture}%
    }
     \vspace{-.5cm}
    \caption{Compact summary of different options of procurement method, type of price and remuneration structure. Our considered DAS market structure is colorcoded, i.e., \textcolor{orgColor}{auctioning}, \textcolor{priceColor}{marginal pricing} and \textcolor{payColor}{capacity payment}.}
    \label{fig:market_summary_compact}
    \vspace{-.4cm}
\end{figure}

As color-coded in Figure \ref{fig:market_summary_compact}, this paper assumes a competitive DAS \emph{spot market} characterized by \emph{marginal pricing} and \emph{capacity-based remuneration}. This architecture mirrors the Finnish power system used in our case study and was selected for its transparency, economic efficiency, and widespread adoption in DAS markets \cite{ffr_global_original}. Capacity payments and marginal pricing are popular among ENTSO-E members \cite{entsoe2022survey} and are implemented in Australia's \emph{fast raise service} \cite{operator2015guide} and PJM's \emph{dynamic regulation signal} \cite{ffr_global_original}.
Further prominent DAS include Frequency Containment Reserve for Disturbances (FCR-D), voltage control, and Fast Frequency Reserve (FFR). Ireland's EirGrid pioneered FFR in 2018, mandating full capacity within $t_a=2$s of a disturbance \cite{FFR_ireland}. Finland implemented FFR in 2020 ($t_a = 0.7$--$1.3$s) to support low-inertia grids \cite{FFR_finland}. Our proposed method applies to the specific Finnish grid code, but can be easily adapted to conform with other market designs.

We focus on the subset of DAS specified by a fast active power response to frequency disturbances and employ FFR and FCR-D in our case study. However, the methods presented apply to a variety of DAS.

In a majority of today's grid codes, DAS are defined by piecewise linear time-domain curves \cite{european2016commission}. Prominent parameters for this curve include an active power setpoint $\Delta p$, an activation time $t_a$, a deployment time $t_d$ and an end time $t_e$. This is exemplarily visualized in Figure \ref{fig:example_requirement_curve} for Fingrid's Fast Frequency Reserve (FFR) service \cite{FFR_technical}, although oftentimes more parameters are specified.

Individual renewable DERs often struggle to provide DAS due to inherent constraints \cite{global_ffr_study}. As illustrated in Figure \ref{fig:example_requirement_curve}, a specific DER might fail due to energy limits (e.g., a battery energy storage system (BESS), \textcolor{red}{red} response) or ramp-rate constraints (e.g., a wind turbine, \textcolor{orange}{orange} response). However, by aggregating heterogeneous DERs into a Dynamic Virtual Power Plant (DVPP), these complementary limitations can be overcome \cite{DVPP_control}. The DVPP can reliably track the service requirement (\textcolor{mplgreen}{green} response), making cooperation effective. The following section outlines the control design for the DVPP.

\section{Dynamic Virtual Power Plant Control}\label{sec:dvpp_framework}

Let the DVPP $\mathcal{D}$ be a set of controllable DERs cooperatively participating in the DAS market. When the service is activated, the DERs $i\in\mathcal{D}$ respond with an active power deviation $\Delta p_i$ summing up to an \emph{aggregated} deviation; $\Delta p_{\mathrm{agg}}=\sum_{i \in \mathcal{D} }\Delta p_i$.

The DERs are assumed to be located at the same bus to measure a bus frequency deviation $\Delta f$.
The \emph{desired} active power response to frequency can be formulated as:
\begin{equation}
\Delta p_{\mathrm{des}} (s) 
= T_\mathrm{des}(s)
\Delta f(s) 
\end{equation}
where $T_\mathrm{des}(s)$ is the Laplace transformation of the time-domain requirement curve of the service (see \cite{verena_grid_codes} for details), e.g., the blue curve in Figure \ref{fig:example_requirement_curve}.
If, for example, the desired behavior is f-p droop control, then 
$T_\mathrm{des}(s)=\frac{-D}{\tau s+1}$
where $D$ is the desired droop coefficient and $\tau$ is the time constant. 

The \emph{local transfer function} 
$T_i(s)$ of each DER maps the frequency measurement to the DERs active  power deviation:
\begin{equation}
\Delta p_{i} (s) 
= T_i(s)
\Delta f(s). 
\end{equation}

The goal is to guarantee that the aggregated output equals the desired output:
$\Delta p_{\mathrm{agg}} = \Delta p_{\mathrm{des}}$. To ensure that, 
we follow the design procedure proposed in \cite[Section III]{DVPP_control} and previous work \cite{barooah2015spectral}: we divide the DERs into \emph{low-pass filter (LPF)}, \emph{band-pass filter (BPF)} and \emph{high-pass filter (HPF)} roles, depending on each DERs capabilities (e.g. power capacity, time constant, energy limits). The LPF DERs are tasked to cover the long-term steady-state power injection 
over time, while the HPF DERs are responsible for the fast time, pulse-like injections and the BPF DERs cover the intermediate regime. Based on the LPF, BPF and HPF roles, each DER is assigned a portion of the desired transfer function by designing an \emph{adaptive dynamic participation factor (ADPF)} $m_i(s)$, encoding the LPF-, BPF- or HPF-role, respectively:
\begin{equation} \label{equ:local_tf_condition}
    T_i(s) = m_i(s) \cdot T_\mathrm{des}(s), \quad\forall i\in \mathcal{D}.
\end{equation}

The \emph{participation condition} ensures that the overall desired behavior is achieved up to bandwidth $\tau_c$:
\begin{equation} \label{equ:participation_condition}
   \textstyle \sum_{i\in\mathcal{D}}m_i(s) 
    \overset{!}{=} 
    \frac{1}{\tau_cs+1} .
\end{equation}
The ADPFs are time-varying; for instance, a reduction in output at $\hat{i}$ (e.g., due to PV intermittency) triggers a compensatory output increase in DERs $j \in \mathcal{D} \setminus \{\hat{i}\}$.

Given the aggregated DVPP operation is specified as above, the local matching control for each DER $i$ is designed to fulfill the matching condition in equation \eqref{equ:local_tf_condition}. 
Figure~\ref{fig:dvpp_scheme} displays the DVPP control architecture used in the case study of Section \ref{sec:case_study}, where we employed anti-windup-based PI controllers.

Alternatively, more robust and optimal controllers can be employed; for instance, \cite{DVPP_control} uses a $\mathcal{H}_\infty$ local matching control.

\begin{figure}[t!]
    \centering
    \includegraphics[width=.85\linewidth]{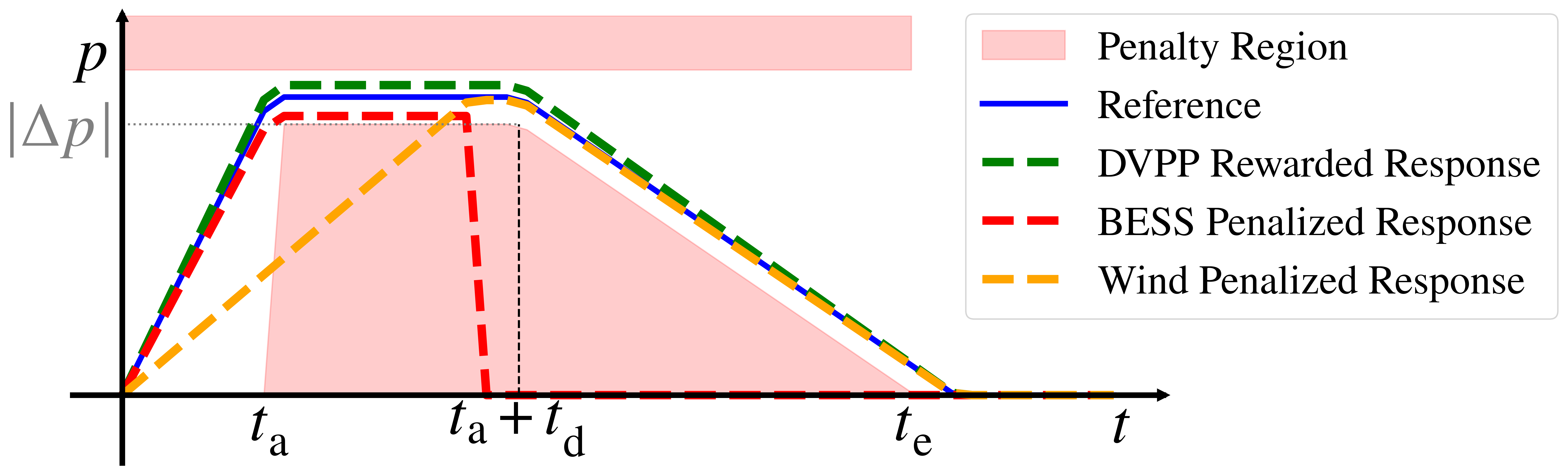}
    \vspace{-.1cm}
    \caption{Example of the time-domain curves for a DAS, namely Fingrid's FFR \cite{FFR_technical}.
    The activation signal is given at $t=0$. Entering the \emph{``Penalty Region''} results in a penalty. The \textcolor{mplgreen}{green} response is rewarded while the \textcolor{red}{red} and \textcolor{orange}{orange} responses are penalized.}\label{fig:example_requirement_curve}
    \vspace{-.2cm}
\end{figure}

\begin{figure}[t!]
    \centering
    \resizebox{0.48\textwidth}{!}{%
    \input{pics/threeblocks}
    } 
    \caption{DVPP $\mathcal{N}$ control scheme with three heterogeneous controllable DERs, a solar PV power plant (PV), wind power plant (Wind) and battery energy storage system (BESS).}
    \label{fig:dvpp_scheme}
    \vspace{-0.4cm}
\end{figure}

\section{The DVPP Reward Allocation Problem}\label{sec:problem_setup}

This section combines the DVPP control (Section \ref{sec:dvpp_framework}) with the DAS market framework (Section \ref{sec:dyn_anc_services}),  represents the DVPP as a techno-economic agent with coupled control and market objectives, and presents the problem of distributing the economic reward among participating DERs.

Consider a set $\mathcal{D}$ of DERs forming a DVPP. The action following a nominal activation signal $\Delta f$
of a DER $i$ 
is denoted by $u_{i}^{\mathcal{D}}(t) = \Delta p_i(t)$. 
The collective DVPP action is $u_{\mathcal{D}}(t)=\sum_{i\in \mathcal{D}}u_{i}^\mathcal{D}(t)$.

The spot market involves a \emph{bidding} process. A DVPP can submit a \emph{bid capacity} $b(\mathcal{D})$ [MW] for a specific product in the DAS market. We assume that the DVPP
is a price taker and neglect the bid offer price.\footnote{The price-taker assumption is standard for DER aggregators and VPPs
bidding in ancillary-service markets~\cite{iria2019aggregator,
mujeeb2025vpp}. However, we do admit it is a questionable assumption when the DVPP to market size is significant.}

Upon the acceptance of a bid, the DVPP incurs a strict service obligation characterized by a \emph{penalty region} $\rho(t,b)$ (e.g., red shaded area in Figure \ref{fig:example_requirement_curve}), which delimits the permissible power injection envelopes for a submitted bid.
The DVPP employs the control strategy defined in Section~\ref{sec:dvpp_framework} to implement the bid they committed to.

To verify compliance, we construct the \emph{test} function
\begin{equation} \label{equ:test_function}
    h:(\mathcal{D},b) \rightarrow \{\text{fail, pass}\}.
\end{equation}
which compares the DVPP response to $\rho(t,b)$. First, $h(\mathcal{D},b)$ evaluates the response of the DVPP $\mathcal{D}$ to a nominal activation signal $\Delta f$. 
Then, if the action $u_\mathcal{D}(t)$ intersects the penalty region $\rho(t,b)$, the test yields $\text{fail}$; otherwise, it yields $\text{pass}$. 
For example, in Figure \ref{fig:example_requirement_curve}, the standalone BESS and Wind fail the test ($h(\{i\},b)=\text{fail},\,i\in\{\text{BESS, Wind}\}$), 
while the aggregated DVPP passes ($h(\mathcal{\mathcal{D}}_\text{DVPP},b)=\text{pass}$).

We employ the test function $h(\mathcal{D},b)$ to assess the reward of the DVPP:
\begin{align}\label{equ:remuneration}
R(\mathcal{D}, b)
=\begin{cases}
    \pi\cdot b(\mathcal{D}), & \text{if }h(\mathcal{D},b)=\text{pass}\\
    -q\cdot b(\mathcal{D}) & \text{if }h(\mathcal{D},b)=\text{fail}
\end{cases}
\end{align}
where $\pi$ and $q$ are the \textit{clearing} and \textit{penalty} price [€/MW] of the service, respectively\footnote{ We assume that the DVPP operator and the system operator can accurately evaluate $h(\mathcal{D},b)$, i.e., the capacity of the DVPP to supply its bid, using ex-post information. A power grid disturbance is not necessarily required. This is justified in the Finnish case, where Fingrid mandates high-resolution data submission \cite{FFR_finland} and sanctions capacity that is not maintained \cite{FCR_finland}.}.

Within this market framework, one can define the optimal ex-ante selection of the bid magnitude $b(\mathcal{D})$. The DERs $i\in \mathcal{D}$ generally exhibit stochastic parameters, for example Solar PV power production, and thus the attainable service is inherently uncertain. Let $\mathcal{I_D}$ denote the information available to forecast the stochastic parameters of the DVPP. Various tools are available to construct a \emph{probabilistic reward} denoted by 
\begin{equation} \label{equ:generic_prob_reward}
    R_\gamma(\mathcal{D},b,\mathcal{I_D}) 
    = b(\mathcal{D})\cdot[\pi \gamma_{\mathcal{D}}(b) -q (1-\gamma_{\mathcal{D}}(b))],
\end{equation}
where $\gamma_{\mathcal{D}}(b)$ is the probability that the bid $b$ is successfully delivered by DVPP $\mathcal{D}$.
$R_\gamma(\mathcal{D},b,\mathcal{I_D}) $ represents the reward $\mathcal{D}$ can expect ex-ante. For example, if there is a 90\% certainty for $\mathcal{D}$'s capability to provide the bid $\hat{b}$, then $R_\gamma(\mathcal{D},\hat{b},\mathcal{I_D})=\hat{b}\cdot(0.9\cdot\pi  - 0.1\cdot q)$. The \emph{optimal bid} $b^*(\mathcal{D})$ is chosen to maximize (\ref{equ:generic_prob_reward}), i.e.,
\begin{equation}\label{equ:opt_bid}
    b^*(\mathcal{D})\leftarrow \arg\underset{b}{\max} \;
    R_\gamma(\mathcal{D},b,\mathcal{I}).
\end{equation}
The optimal bid $b^*(\mathcal{D})$ is submitted to the market, if it exceeds a required minimum bid capacity $b_\text{min}$. 
At the time of delivery, the system operator evaluates $h(\mathcal{D},b^*)$ to remunerate the DVPP with the reward $R(\mathcal{D}, b^*)$. 

Consider now a superset of DERs $\mathcal N$ such that $\mathcal{D}\subseteq\mathcal{N}$ and denote $\mathcal{N}$ as the \emph{grand coalition}.
As an economic entity, $\mathcal N$ faces the following major decisions:
\begin{enumerate}
    \item How should $\mathcal{N}$ maximize its DAS market value?
    \item How should the received reward be allocated among the DERs $i\in\mathcal{N}$?
\end{enumerate}

To maximize $\mathcal{N}$'s market value, we employ (\ref{equ:generic_prob_reward}) and define the set of all partitions of $\mathcal N$ as $\Pi(\mathcal N)$\footnote{ Formally, let $\mathcal{P}(S)$ be the powerset of $\mathcal{N}$, then \newline$\Pi(S) = \left\{ P \subseteq \mathcal{P}(S) \setminus \{\emptyset\} \;\middle|\; \bigcup_{A \in P} A = S \land \forall A \neq B \in P, A \cap B = \emptyset \right\}$.}.
Let $P^*(\mathcal N)\in\Pi(\mathcal N)$ be the \textit{optimal partition} of DVPPs $\mathcal{D}$ available to $\mathcal{N}$:
\begin{equation}\label{equ:opt_partition}
    P^*(\mathcal N) := \arg\max_{P\in \Pi(\mathcal N)} \textstyle\sum_{\mathcal{D}\in P}R_\gamma(\mathcal{D}, b^*(\mathcal{D}),\mathcal{I}_{\mathcal D}).
\end{equation}

The \emph{forecasted reward} is the aggregated reward the grand coalition can expect by bidding optimally (\ref{equ:opt_bid}) in the market:
\begin{equation}\label{equ:forecast_value_N}
    v_\mathbb{F}(\mathcal N) = 
    \textstyle\sum_{\mathcal{D}\in P^*(\mathcal N)} R_\gamma(\mathcal{D}, b^*(\mathcal{D}),\mathcal{I}_{\mathcal D}).
\end{equation}

Similarly, the \textit{realized reward} is the attained remuneration (\ref{equ:remuneration}) summed over all $\mathcal{D}\in P^*(\mathcal{N})$:
\begin{align}\label{equ:real_reward_value_function}
v_\mathcal{R}(\mathcal N)=\textstyle\sum_{\mathcal{D}\in P^*(\mathcal{N})} R(\mathcal{D},b^*(\mathcal{D}))
\end{align}
However, the DERs $i\in\mathcal{N}$ will only participate in the grand coalition if it is advantageous for them individually. Thus, the \textit{reward allocation} $A^*(\mathcal{N},v)$, yielding \textit{individual payoffs} $x(i)$:
\begin{equation}\label{equ:allocation_rule}
   A^*(\mathcal{N},v):\mathcal{N}\rightarrow  [x(i)]_{i\in \mathcal{N}}, \, \text{ s.t. } \textstyle\sum_{i\in \mathcal{N}}x(i) = v(\mathcal{N})
\end{equation}
is pivotal for the cooperation of the DERs. The allocation must offer each DER $i \in \mathcal{N}$ benefits exceeding their outside options. Critically, the benefits must apply both \emph{before} submitting the bid, $A^*(\mathcal{N},v_\mathbb{F})$, and \emph{after} allocating the reward, $A^*(\mathcal{N},v_\mathcal{R})$, to ensure cooperation. 
Without such guarantees, the grand coalition risks fragmentation as DERs may benefit from exiting the cooperation and forming other coalitions.

\begin{figure}[t!]
    \centering
    \resizebox{0.5\textwidth}{!}{%
    \begin{tikzpicture}[
        >=Latex,
        font=\small,
        eventbox/.style={draw, rectangle, align=center, inner sep=4pt, minimum height=1.1cm, anchor=south},
        timelabel/.style={below=3pt, align=center, font=\footnotesize},
        dashedcyanbox/.style={rectangle, draw=cyan!80, fill=cyan!5, dashed, rounded corners, align=center, font=\small\color{cyan!80}, inner sep=3pt, anchor=center, minimum height=1.1cm},
        dashedpinkbox/.style={rectangle, draw=payColor!80, fill=payColor!5, dashed, rounded corners, align=center, font=\small\color{payColor!80}, inner sep=3pt, anchor=center, minimum height=1.1cm},
        dvpparrow/.style={->, solid, thick, cyan!80, shorten >=2pt, shorten <=2pt},
        rwarrow/.style={->, solid, thick, payColor!80, shorten >=2pt, shorten <=2pt}
    ]

    \coordinate (start) at (0,0);
    \coordinate (pos_dminus1_box) at (1.7, 0);
    \coordinate (pos_submit) at (4.5, 0);
    \coordinate (pos_fcr) at (5.2, 0);
    \coordinate (pos_ffr) at (7.2, 0);
    \coordinate (pos_delivery) at (9.8, 0);
    \coordinate (pos_reward) at (12.5, 0);
    \coordinate (end) at (13.8,0);

    \node[draw, rectangle, anchor=east, minimum size=0.8cm] (d-1) at (start) {D-1};
    \node[draw, rectangle, anchor=west, minimum size=0.8cm] (d+1) at (end) {D+1};
    \draw[->, thick] (d-1.east) -- (d+1.west);

    \node (market_events) [draw=gray, dotted, rounded corners, above=0.25cm of d-1, font=\footnotesize, align=center] {Market Events};
    \draw[dotted, gray, thick] (d-1.north) -- (market_events.south);
    
    \node (dvpp_events_label) [draw=gray, dotted, rounded corners, below=.5cm of d-1, font=\footnotesize, align=center] {Operation\\of DVPP $\mathcal{D}$};
    \draw[dotted, gray, thick] (d-1.south) -- (dvpp_events_label.north);

    \node (reward_mechanism_label) [draw=gray, dotted, rounded corners, below=.7cm of dvpp_events_label, font=\footnotesize, align=center] {Reward\\Mechanism\\for DVPP $\mathcal{N}$};
    \draw[dotted, gray, thick] (dvpp_events_label.south) -- (reward_mechanism_label.north);

    \pgfmathsetlengthmacro{\boxY}{0.2cm}
    \node[eventbox] at (pos_fcr |- 0,\boxY) {FCR\\market\\bids close};
    \node[eventbox] at (pos_ffr |- 0,\boxY) {FFR\\market\\bids close};
    \node[eventbox, minimum width=1.8cm] at (pos_delivery |- 0,\boxY) {Delivery\\for D};
    \node[eventbox] at (pos_reward |- 0,\boxY) {Settlement};

    \node[timelabel] at (pos_dminus1_box) {12AM};
    \node[timelabel] at (pos_fcr) {6:30 PM};
    \node[timelabel] at (pos_ffr) {10:30 PM};
    \node[timelabel] at (pos_delivery) {12AM - 11:59PM};

    \node[dashedcyanbox] (dvpp1) at (pos_dminus1_box |- dvpp_events_label) {calculate\\\& optimize\\$R_\gamma(\mathcal{D},\cdot)$};
    \node[dashedcyanbox] (dvpp2) at (pos_submit |- dvpp_events_label) {submit bid\\$b^*(\mathcal{D})$};
    \node[dashedcyanbox] (dvpp3) at (pos_delivery |- dvpp_events_label) {evaluate\\$h(\mathcal{D},b^*)$};
    \node[dashedcyanbox] (dvpp4) at (pos_reward |- dvpp_events_label) {calculate\\reward\\$R(\mathcal{D},b^*)$};

    \node[dashedpinkbox] (rw1) at (pos_dminus1_box |- reward_mechanism_label) {evaluate $v_\mathbb{F}(S)$ \\ $\forall S\subseteq \mathcal{N}$};
    \node[dashedpinkbox] (rw3) at (pos_delivery |- reward_mechanism_label) {evaluate $h(S,b^*)$ \\ and $v_\mathcal{R}(S),$ \\ $\forall S \subseteq \mathcal{N}$};
    \node[dashedpinkbox] (rw4) at (pos_reward |- reward_mechanism_label) {split reward\\$v_\mathcal{R}(\mathcal{N})$\\among $i\in \mathcal{N}$};

    \draw[dvpparrow] (dvpp1) -- (dvpp2);
    \draw[dvpparrow] (dvpp2) -- (dvpp3);
    \draw[dvpparrow] (dvpp3) -- (dvpp4);
    \draw[rwarrow] (dvpp1) -- (rw1);
    \draw[rwarrow] (dvpp3) -- (rw3);
    \draw[rwarrow] (dvpp4) -- (rw4);
    \draw[rwarrow] (rw1) -- (rw3);
    \draw[rwarrow] (rw3) -- (rw4);

    
    \coordinate (zoom_center) at (7.2, -5.5); 

    \node[draw=payColor, dashed, thick, rounded corners, fill=white, inner sep=5pt] (zoombox) at (zoom_center) {
        \footnotesize
        $
        \xymatrix{
            &
            \text{Is }\mathcal{G}_\mathcal{R}\text{ convex?} \ar[ld]^{\text{\normalsize yes}} \ar[rd]_{\text{\normalsize no}} & \\
            {\begin{array}{cc}
                \text{Shapley Value} \\
                \phi(v_\mathcal{R})
            \end{array}}
            &
            &
            {\begin{array}{cc}
                 \text{Nucleolus} \\
                \mathbf{nc} (v_\mathcal{R})
            \end{array}}
        }
        $
    };

    \node[payColor, anchor=north east] at (zoombox.south east) {Definition of $A^*(\mathcal{N},v_\mathcal{R})$};

    \draw[payColor, dotted, ultra thick] (rw4.south west) -- (zoombox.north east);
    \draw[payColor, dotted, ultra thick] (rw4.south east) -- (zoombox.north east);

    \end{tikzpicture}
    }
    \vspace{-.5cm}
    \caption{DVPP daily operation in the DAS market in Fingrid and the proposed reward allocation. The top row are market events. The middle row, \textcolor{cyan}{cyan colored}, is the operation of a DVPP $\mathcal{D}$ in the market, described in Section \ref{sec:problem_setup} and simulated for all $\mathcal{D}\subseteq\mathcal{N}$. The bottom row, \textcolor{payColor}{green colored}, is the reward allocation timeline outlined in Section \ref{sec:results} for the grand coalition $\mathcal{N}$. The bottom center details the logic of the reward allocation $A^*$.}
    \label{fig:timeline}
    \vspace{-0.4cm}
\end{figure}

Figure \ref{fig:timeline} illustrates the DAS market timeline.
The \textcolor{cyan}{cyan} row indicates the operation of DVPP $\mathcal{D}$ in the DAS market: submitting the bid $b^*(\mathcal{D})$ based on performance forecasts,
evaluating of the test function $h(\mathcal{D},b^*)$ and receiving the collective reward $R(\mathcal{D},b^*)$. The \textcolor{cyan}{cyan} row is run for each DVPP $\mathcal{D}$ contained in the optimal partition $P^*(\mathcal{N})$. Notice that although the problem is designed for the Finnish market, both Sections \ref{sec:dvpp_framework} and \ref{sec:problem_setup} can be adapted to account for other settings: For instance, the PJM \textit{dynamic regulation signal} requires an open-loop power setpoint (simply replacing $\Delta f\rightarrow r(t)$) and a more complex remuneration (requiring to adjust $h$ and $R$), which does not lead to fundamentally different results.

\section{Results: Proposed reward allocation}\label{sec:results}

In this section, we establish the suitable cooperation criteria and propose a joint control design and reward allocation for the DERs $i\in\mathcal{N}$.
On the premise that these independent DERs seek to participate in DAS provision, the set of fundamental criteria (\ref{C1}-\ref{C5}) governing the DVPP cooperation are:

\begin{mdframed}
\begin{enumerate}[label=\textbf{C\arabic*}]
    \item \label{C1} \text{Individual rationality and coalitional stability}: \emph{Every DER benefits from joining the grand coalition.}
    \item \label{C2} \text{Incentive compatibility}: \emph{It's in the DERs' interest to act truthfully\footnotemark.}
    \item \label{C3} Optimality and Efficiency: \emph{The DVPPs operate optimally and exactly the realized reward is allocated.}
    \item \label{C4} Fairness: \emph{The reward is distributed fairly among its members.}
    \item \label{C5} Ex-Post Consistency: \emph{Ex-post and ex-ante reward align up to a systematic bias $\mu$ over repeated games.}
\end{enumerate}
\end{mdframed}
\footnotetext{ Every DER $i\in\mathcal{N}$ has private information $\mathcal{I}_i$ (e.g., forecast of their own generation, state of charge of battery, ...) which must be shared non-manipulated to enable efficient DVPP cooperation.}
Broadly, Criterion \ref{C1} ensures the \emph{viability}, \ref{C2}--\ref{C3} guarantee the \emph{performance}, and \ref{C4}--\ref{C5} establish the \emph{fairness} of the cooperation.
 
 We propose the following steps (visualized in Figure \ref{fig:timeline}) to fulfill the criteria and answer the problem statement:
 \begin{enumerate}[label=\Alph*., font=\itshape]
    \item \textbf{Scenario Approach:} We employ ensembles of predicted scenarios (e.g., meteorological forecasts) to quantify the probabilistic reward \eqref{equ:generic_prob_reward}.
    \item \textbf{Forecasted Reward Game:} We quantify the value of all DERs and subcoalitions $S\subseteq \mathcal{N}$ by \emph{simulating} their performance in the market as described in Section \ref{sec:problem_setup}. The optimal bid(s) of $\mathcal N$ are submitted to the market.
    \item \textbf{Realized Reward Game:} 
    At delivery time, the capacity of each subcoalition is assessed via the test function. The assessment for $\mathcal N$ determines the realized reward.
    \item[\textit{D.-E.}] \textbf{Reward Allocation:} The reward is allocated to the DERs via $A^*(\mathcal{N},v_\mathcal{R})$, using the Nucleolus or Shapley Value.
\end{enumerate}

\subsection{Scenario Approach}\label{subsec:scenario_approch}
Inspired by previous advances in cooperative game theory, e.g., \cite{ieong2008bayesian}, we select a scenario approach for our issue and employ meteorological forecast data to generate scenarios for the next-day capability of the DERs to deliver the service. 
Let $\mathcal{I}_\mathcal{D}=\{(b_{1},\eta_1),...,(b_{K}, \eta_K)\}$ be a set of $K$ forecasted scenarios for DVPP $\mathcal{D}$, where $\eta_k$ is the probability of each scenario and $b_k$ is the maximum bid that can be delivered in scenario $k$. 

The subset of scenarios capable of achieving a bid $b$ is defined as $\mathcal{M}(\mathcal{D},b)$.
Then, the \emph{reliability} (forecasted probability of success) for bid $b$ is given by
\[
\gamma_{\mathcal{D}}(b)=\textstyle\sum_{k\in \mathcal{M}(\mathcal{D},b)}\eta_k  ,
\]
allowing us to evaluate the probabilistic reward $R_\gamma(\mathcal{D},b,\mathcal{I}_\mathcal{D})$ \eqref{equ:generic_prob_reward} and the optimal bid $b^*(\mathcal{D})$ \eqref{equ:opt_bid}.

\subsection{Forecasted Reward Game}\label{subsec:forecasted_game}

In order to construct a desirable reward allocation, we treat the DERs $i\in\mathcal{N}$ as players in a \emph{cooperative game}. A \textit{value} is assigned to each set of DERs, named a \emph{coalition} $S\subseteq\mathcal{N}$.
\begin{definition}[Cooperative game \cite{osborne1994course}]
\label{def:cooperative_game}
    A cooperative game $(\mathcal{N},v)$ consists of at least the following elements: a set of players $i\in \mathcal{N}$ and a value function $v(S):S\rightarrow \mathbb{R}$ with $v(\emptyset)=0$, which assigns a value to every coalition $S\subseteq\mathcal{N}$.
\end{definition}

For example, $v(\{i\})$ is the value of a standalone DER.
To construct a value function, we \textbf{simulate} the operation of every subcoalition $S\subseteq\mathcal{N}$. Specifically, this involves simulating the optimal partition forming and bidding process outlined in Section \ref{sec:problem_setup} by substituting the grand coalition $\mathcal{N}$ with the hypothetical subcoalition $S$, for all $S\subseteq\mathcal{N}$. The subcoalition operations are purely synthetic operations used for calculation; only the grand coalition $\mathcal{N}$ is physically realized in practice. 

Employing the results from the simulation, we calculate the ex-ante \emph{forecasted value} $v_\mathbb{F}(S)$ (\ref{equ:forecast_value_N}) for all $S\subseteq\mathcal{N}$
and use Definition \ref{def:cooperative_game}, to construct a \emph{Forecasted Reward Game}:
\begin{definition}[Forecasted Reward Game]
    The Forecasted Reward Game denoted by $\mathcal{G}_\mathbb{F}:=\left(\mathcal{N},v_\mathbb{F}\right)$ quantifies the ex-ante value of the coalitions $S\subseteq\mathcal{N}$.
\end{definition}

The forecasted value game $\mathcal{G}_\mathbb{F}$ is evaluated at ``D-1'' in Figure \ref{fig:timeline} and enables the grand coalition formation, submission of bids (\ref{equ:opt_bid}) and operation of DVPPs $\mathcal{D}\in P^*(\mathcal{N})$ (\ref{equ:opt_partition}).

\subsection{Realized Reward Game}\label{subsec:realized_game}

After determining $v_\mathbb{F}(S)$, the ex-post simulation yields the \emph{realized value function} $v_\mathcal{R}(S)$, quantifying the reward (\ref{equ:real_reward_value_function}) by substituting $\mathcal{N}$ with $S,\forall S\subseteq\mathcal{N}$.

We define the \emph{Realized Reward Game} to model the ex-post value of the market settlement. Importantly, the ex-post game involves the absence of agency; the coalition formation is fixed based on the ex-ante results ($\mathcal{G}_\mathbb{F}$) and the DERs simply implement their predefined role in the DVPPs.

\begin{definition}[Realized Reward Game]
    The Realized Reward Game defined by $\mathcal{G}_\mathcal{R}:=\left(\mathcal{N},v_\mathcal{R}\right)$ quantifies the ex-post value for every coalition $S\subseteq \mathcal{N}$.
\end{definition}

\subsection{Nucleolus Reward Allocation}\label{subsec:Nucleolus}
We now propose a reward allocation (\ref{equ:allocation_rule}) that splits the collective reward for both $\mathcal{G}_\mathbb{F}$ and $\mathcal{G}_\mathcal{R}$ via $A^*(\mathcal{N},v_\mathbb{F})$ and $A^*(\mathcal{N},v_\mathcal{R})$, respectively. To satisfy \ref{C1}--\ref{C3}, we first employ the \emph{Nucleolus} as the allocation rule.

The Nucleolus is an allocation method designed to maximize coalitional stability by minimizing the ``dissatisfaction'' over all coalitions of DERs. Dissatisfaction is quantified by the \emph{excess}, $e_S(x) = x(S) - v_\mathbb{F}(S)$, where $x(S)=\sum_{i\in S}x(i)$, measuring $S$'s payoff minus its value. Intuitively, the Nucleolus protects against fragmentation: it identifies the allocation that maximizes the excess of the worst-off coalition, then the second worst-off, and so on. Consequently, it prioritizes that no DER leaves the grand coalition.
Let 
\begin{equation}\label{equ:allocations_ind_rational}
    X_{v_{\mathbb{F}}} = \left\{ x \in \mathbb{R}^{|\mathcal{N}|} \mid x(\mathcal{N})=v_{\mathbb{F}}(\mathcal{N}),\, x(i) \geq v_{\mathbb{F}}(\{i\}) \,\forall i \right\}
\end{equation} 
be the individual rational allocations. Formally, the Nucleolus is the unique allocation $\mathbf{nc}$ that lexicographically maximizes the vector of sorted excesses $\theta(x)$\cite{schmeidler1969nucleolus}:
\begin{equation}\label{equ:nucleolus_def}\mathbf{nc}(v_{\mathbb{F}}) = \left\{ x \in X_{v_{\mathbb{F}}}\mid \theta(x) \succeq_{lex} \theta(y), \forall y \in X_{v_{\mathbb{F}}} \right\}.
\end{equation}
A detailed definition of the Nucleolus is provided in Appendix-\ref{sec:game_theory_definitions}. For our setting, the Nucleolus is the unique reward allocation \cite{nucl_kernel} that ensures grand coalition cooperation at all times:

\begin{proposition}\label{lemma:ex_ante_ind_rational}
    The Forecasted Reward Game $\mathcal{G}_\mathbb{F}$ split with the \emph{Nucleolus} reward allocation fulfills \ref{C1}.
\end{proposition}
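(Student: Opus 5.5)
The proof has two parts. Individual rationality holds by construction, because the Nucleolus is selected inside $X_{v_\mathbb{F}}$ in \eqref{equ:nucleolus_def}. Coalitional stability requires showing that the core of $\mathcal{G}_\mathbb{F}$ is non-empty, since the Nucleolus lies in the core whenever the core is non-empty.

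\textbf{Step 1: non-emptiness of $X_{v_\mathbb{F}}$.} I first verify that $v_\mathbb{F}$ is superadditive. Take disjoint $S,T\subseteq\mathcal{N}$. Any partition $P_S\in\Pi(S)$ and $P_T\in\Pi(T)$ combine into the partition $P_S\cup P_T\in\Pi(S\cup T)$. The objective in \eqref{equ:opt_partition} is additive over the blocks of a partition, and each block's term $R_\gamma(\mathcal{D},b^*(\mathcal{D}),\mathcal{I}_\mathcal{D})$ depends only on $\mathcal{D}$. Maximizing over partitions of $S\cup T$ therefore gives
\[
v_\mathbb{F}(S\cup T)\ge v_\mathbb{F}(S)+v_\mathbb{F}(T).
\]
Applying this repeatedly gives $v_\mathbb{F}(\mathcal{N})\ge\sum_i v_\mathbb{F}(\{i\})$, so $X_{v_\mathbb{F}}\neq\emptyset$. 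Since $X_{v_\mathbb{F}}$ is compact and the excesses are continuous, Schmeidler's result \cite{schmeidler1969nucleolus} gives a unique $\mathbf{nc}(v_\mathbb{F})$. It satisfies $x(i)\ge v_\mathbb{F}(\{i\})$ and $x(\mathcal{N})=v_\mathbb{F}(\mathcal{N})$, which is individual rationality and efficiency.

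\textbf{Step 2: coalitional stability.} I interpret coalitional stability as core membership: $x(S)\ge v_\mathbb{F}(S)$ for all $S$, i.e.\ all excesses $e_S(x)\ge 0$. The general fact I use is that if some core allocation $y$ exists, then $\min_S e_S(y)\ge0$. The lexicographic maximality of $\mathbf{nc}$ forces its smallest excess to be at least that of $y$, hence also nonnegative, so $\mathbf{nc}\in$ core.

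The remaining task is to show the core is non-empty, and this is the main obstacle. Superadditivity alone does not imply a non-empty core. The route I would try first is the Bondareva--Shapley theorem, which reduces the question to balancedness. Let $\{\lambda_S\}$ be balanced weights, meaning $\sum_{S\ni i}\lambda_S=1$ for every player $i$. I need $\sum_S\lambda_S v_\mathbb{F}(S)\le v_\mathbb{F}(\mathcal{N})$. I would attempt this via the partition structure of $v_\mathbb{F}$. Each $v_\mathbb{F}(S)$ is a sum of block values $R_\gamma(\mathcal{D})$ over $\mathcal{D}\in P^*(S)$, so the weighted sum becomes a balanced combination of block values $w(\mathcal{D})=\max(R_\gamma(\mathcal{D},b^*),0)$. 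The question is then whether the set-partitioning LP $\max\sum_{\mathcal{D}}\lambda_{\mathcal{D}}w(\mathcal{D})$ over fractional partitions is attained at an integral partition.

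In general this need not hold, so the argument must exploit structure specific to the DVPP. For scenario-based reliability, if all DERs can in principle be aggregated, then pooling DERs into $\mathcal{N}$ weakly increases the deliverable bid $b_k$ in every scenario. This is because ADPFs can always assign a zero share to members, which makes $v_\mathbb{F}$ at least the partition value. If that integrality fails, I would fall back to the paper's framing. In that framing, C1 is understood as the Nucleolus being the allocation that maximizes the minimal excess, so it attains core stability whenever the core is non-empty and minimizes the worst dissatisfaction otherwise (the least-core property). The proof would then state this conditional core result explicitly.
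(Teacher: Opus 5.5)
Your Step 1, and the first half of Step 2 (the nucleolus lies in the core whenever the core is non-empty), match the paper's argument. The gap is the one you flag yourself: you never show that the core of $\mathcal{G}_\mathbb{F}$ is non-empty. C1 asks for coalitional stability unconditionally, so falling back to a conditional or least-core statement proves something weaker than the proposition.

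The DVPP structure you propose to use does not close this gap. ``Pooling weakly increases $b_k$'' and ``ADPFs can assign a zero share'' give only monotonicity, $b_k(\mathcal{N})\ge b_k(S)$. Monotonicity plus superadditivity does not imply balancedness. The three-player majority game ($v=1$ on every pair and on $\mathcal{N}$, $v=0$ on singletons) is monotone and superadditive. Yet the weights $\lambda_S=\tfrac12$ on the pairs give $\sum_S\lambda_S v(S)=\tfrac32>1$, so its core is empty. For the same reason, integrality of your set-partitioning LP cannot be read off the block values $w(\mathcal{D})$ alone.

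The missing idea, used in the paper's Bondareva--Shapley proof (Appendix B), is to \emph{implement} a fractional collection as a single operation of the grand coalition. Given balanced weights $\lambda$, every DER $i$ superimposes its scaled sub-coalition actions, $\hat u_i^{\mathcal{N}}=\sum_S\lambda_S u_i^S$. Since $\sum_{S\ni i}\lambda_S=1$, this is a convex combination of feasible actions, and the service envelope scales with the bid. Hence $\mathcal{N}$ can serve $\hat b(\mathcal{N})=\sum_S\lambda_S b^*(S)$.

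Because $R_\gamma$ is linear in $b$ at fixed reliability, $\sum_S\lambda_S v_\mathbb{F}(S)=\hat b(\mathcal{N})\,[\pi\bar\gamma-q(1-\bar\gamma)]$, where $\bar\gamma$ is the capacity-weighted average of the $\gamma_S^*$. Working with the blocks of $P^*(S)$, as in your reduction, makes this legitimate. The paper then \emph{assumes} $\gamma_{\mathcal{N}}(\hat b)\ge\bar\gamma$ (risk pooling). Monotonicity of $R_\gamma$ in $\gamma$ and optimality of $v_\mathbb{F}(\mathcal{N})$ then yield $\sum_S\lambda_S v_\mathbb{F}(S)\le v_\mathbb{F}(\mathcal{N})$.

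The superposition by itself only gives the per-scenario bound $b_k(\mathcal{N})\ge\sum_S\lambda_S b_k(S)$, which does not in general imply $\gamma_{\mathcal{N}}(\hat b)\ge\bar\gamma$. To complete your proof you must either adopt that reliability assumption explicitly, as the paper does, or replace it with another hypothesis.
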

\begin{proof}
The game $\mathcal{G}_\mathbb{F}$ is \emph{Superadditive} by definition, meaning the value of a coalition $S_{1}\cup S_2$ is always greater-equal its disjoint subsets, i.e.,
\begin{equation}
\begin{gathered}
    \forall S_1, S_2\subseteq \mathcal{N} \text{ s.t. } S_1\cap S_2=\emptyset: \\
    v_\mathbb{F}(S_1\cup S_2)\geq v_\mathbb{F}(S_1) + v_\mathbb{F}(S_2).
\end{gathered}
\end{equation}
For a Superadditive game, the Nucleolus is individual rational \cite{nucl_kernel}. Moreover, since the game has a \emph{non-empty $\mathrm{core}$} (proof in Appendix-\ref{sec:appendix_proof_non_empty_core}), the Nucleolus lies within the $\mathrm{core}$, thereby providing coalitional stability \cite{nucl_kernel}. 
\end{proof}

\begin{proposition}\label{lemma:incentive_compaibility_nucleolus}
    The Nucleolus method applied to $\mathcal{G}_\mathbb{F}$ and $\mathcal{G}_\mathcal{R}$ ensures incentive compatibility (\ref{C2}). 
\end{proposition}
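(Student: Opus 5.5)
Incentive compatibility (\ref{C2}) concerns the private information $\mathcal{I}_i$ (generation forecasts, state of charge, etc.) that each DER reports before the forecasted values $v_\mathbb{F}(S)$ are simulated, and the payoff a DER ultimately collects is the realized allocation $x_i=\mathbf{nc}(v_\mathcal{R})_i$. The plan is to show that no unilateral misreport $\tilde{\mathcal{I}}_i\neq\mathcal{I}_i$ can raise DER $i$'s expected realized payoff. The key structural fact is that $v_\mathcal{R}(S)$ is computed ex post via the test function $h(S,b^*)$ from measured physical capability. A report therefore affects DER $i$'s payoff only through the bids $b^*(S)$, computed from \eqref{equ:opt_bid}, and the partition $P^*(S)$ from \eqref{equ:opt_partition}. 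It does not affect whether a given capacity is actually delivered.

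\emph{Step 1: truthful reports maximize coalition values.} For each coalition $S\ni i$, the realized value under bid $b$ is $b\pi$ on pass and $-qb$ on fail. Its expectation under the true distribution is exactly $R_\gamma(S,b,\mathcal{I}_S)$ with the true $\gamma_S$. Since $b^*(S)$ maximizes $R_\gamma$ under the reported scenarios, truthful reporting maximizes $\mathbb{E}[v_\mathcal{R}(S)]$ for every $S$. Any misreport can only produce a bid (or partition) that is suboptimal with respect to the true distribution, so it weakly lowers $\mathbb{E}[v_\mathcal{R}(S)]$ for all $S\ni i$. Coalitions $S\not\ni i$ are unaffected by $i$'s report.

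\emph{Step 2: translate value changes into payoff changes.} I would use the monotonicity properties of the nucleolus here. Combining Proposition~\ref{lemma:ex_ante_ind_rational} (the nucleolus lies in the core, which is nonempty by Appendix-\ref{sec:appendix_proof_non_empty_core}) with the lexicographic characterization \eqref{equ:nucleolus_def}, I would argue the following. If $i$'s report only weakly decreases $v(S)$ for coalitions containing $i$, then the excess $e_S(x)=x(S)-v(S)$ of those coalitions weakly increases, which relaxes the constraints that protect $i$'s share. In particular it lowers $v(\mathcal{N})$, the total to be split. I would formalize this as the weak statement $\mathbb{E}[x_i(\tilde{\mathcal I}_i)]\le\mathbb{E}[x_i(\mathcal I_i)]$ by an explicit argument on the nucleolus linear programs. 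In the first LP, the drop in $v(\mathcal{N})$ reduces the total budget, while the lowered values for $S\ni i$ cannot raise the lower bounds $x(S)\ge v(S)+\epsilon$ that would favor $i$.

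\emph{Step 3: detect misreports ex post.} Finally, inflated reports of the kind ``I can deliver more'' are exposed ex post: $h(S,b^*)=\mathrm{fail}$ for the coalitions relying on them, so $v_\mathcal{R}(S)=-qb^*(S)<0$, and individual rationality of the nucleolus for $\mathcal{G}_\mathcal{R}$ is then computed on these lowered values.

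The main obstacle is Step 2. The nucleolus is not monotone in general: lowering $v(S)$ for a coalition containing $i$ can shift the lexicographic optimum in $i$'s favor relative to others. If the general argument fails, I would first try to restrict to this game's structure, with capacity payments linear in $b$ and the bid decisions decoupled across partitions. I would then exploit the fact that in the realized game only feasible physical capacity generates value, so a misreport can never create value attributable to $i$ and can only destroy value. If even that is insufficient, I would settle for a weaker guarantee: truthful reporting is optimal in the core-stable sense, in that any gain from misreporting is bounded by the lost coalition value, which is nonpositive in expectation.
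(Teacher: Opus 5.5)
Your Steps~1--2 follow the paper's own proof (Appendix-\ref{sec:appendix_proof_c2_nucl}). There, a misreport by $j$ makes the bids of every $C\ni j$ suboptimal, so $\mathbb{E}[\tilde v_\mathcal{R}(C)]<\mathbb{E}[v_\mathcal{R}(C)]$, while coalitions $K\not\ni j$ are untouched. The paper then closes Step~2 by arguing that, because the excesses of coalitions without $j$ are unchanged, the nucleolus must take the shortfall of $v_\mathcal{R}(\mathcal N)$ out of $x(j)$.

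You are right that this is where the argument breaks. Neither that reasoning nor your first-LP sketch repairs it, because the implication you need is false for the nucleolus: $\tilde v(C)\le v(C)$ for all $C\ni j$ and $\tilde v(K)=v(K)$ for $K\not\ni j$ does not imply $\mathbf{nc}_j(\tilde v)\le\mathbf{nc}_j(v)$.
\begin{itemize}
\item Lowering only $v(\mathcal N)$ is a perturbation of this type for every $j$. The nucleolus is not aggregate monotonic (Megiddo, 1974): there are games in which lowering $v(\mathcal N)$ strictly raises some player's nucleolus payoff. Call that player $j$.
\item For $|\mathcal N|\ge 5$ no core-selecting rule is coalitionally monotonic (Young, 1985). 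So lowering a single $v(T)$ with $T\ni j$ can raise $\mathbf{nc}_j$, even between balanced games.
\item The nucleolus is continuous in $v$, so adding small decreases on the remaining coalitions containing $j$ keeps the strict gain. The strict inequalities from Step~1 therefore do not help.
\end{itemize}
Concretely, your LP argument only looks at the first stage. The perturbation changes the optimal level $\epsilon$ and which coalitions become tight (and are frozen) at each stage, and the later stages can shift payoff toward $j$.

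There is a second gap, independent of monotonicity. Step~1 gives only an inequality in expectation. Realization by realization it can reverse: a misreported bid may pass where the truthful one fails, or be larger and still pass. Identifying $\mathbb{E}[v_\mathcal{R}]$ with $R_\gamma$ also presumes calibrated scenarios, whereas the paper allows a bias $\mu$. Because $v\mapsto\mathbf{nc}(v)$ is piecewise linear but not linear, even a correct deterministic monotonicity lemma would not turn $\mathbb{E}[\tilde v_\mathcal{R}]\le\mathbb{E}[v_\mathcal{R}]$ into $\mathbb{E}[\mathbf{nc}_j(\tilde v_\mathcal{R})]\le\mathbb{E}[\mathbf{nc}_j(v_\mathcal{R})]$.

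This is exactly what makes the Shapley case (Appendix-\ref{sec:appendix_proof_c2_shapley}) go through. There, $\phi_j$ is a positive combination of $j$'s marginal contributions, each of which weakly drops in expectation, and linearity lets the expectation pass through $\phi$.

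Step~3 adds nothing independent. A failed test again only lowers $v_\mathcal{R}(S)$ for $S\ni j$, and still needs Step~2 to reach $x(j)$.

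The only case in which the ``burden falls on $j$'' intuition is literally true is a uniform shift, $\tilde v(S)=v(S)-\delta$ for all $S\ni j$. Covariance of the nucleolus under adding additive games then lowers $\mathbf{nc}_j$ by exactly $\delta$ and leaves the other payoffs unchanged. Beyond that case, a proof needs one of the following: extra structure on how a misreport can move $v_\mathcal{R}$, a restriction on $|\mathcal N|$, or a weaker version of the proposition.
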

\begin{proof}
The proof is given in the Appendix-\ref{sec:appendix_proof_c2_nucl}.
\end{proof}

\begin{proposition}\label{lemma:optimality}
    The Nucleolus satisfies \ref{C3}.
\end{proposition}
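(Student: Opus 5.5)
The proof splits \ref{C3} into its two components, efficiency and optimality, and treats each separately.

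\emph{Efficiency} comes almost directly from the definitions. By \eqref{equ:allocation_rule}, any admissible reward allocation must satisfy $\sum_{i\in\mathcal N}x(i)=v(\mathcal N)$. The Nucleolus in \eqref{equ:nucleolus_def} is defined as an optimizer over the set $X_{v}$ in \eqref{equ:allocations_ind_rational}, and this set already imposes $x(\mathcal N)=v(\mathcal N)$. I will first check that $X_{v}$ is nonempty, so that the Nucleolus exists and is unique. For $v=v_\mathbb{F}$ this holds by superadditivity: $v_\mathbb{F}(\mathcal N)\ge\sum_i v_\mathbb{F}(\{i\})$, as used in Proposition~\ref{lemma:ex_ante_ind_rational}. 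For $v=v_\mathcal{R}$ I will rely on the fact that, in the case where $\mathcal{G}_\mathcal{R}$ is not covered by the Shapley branch, the Nucleolus is taken over the pre-imputation set, which is always nonempty. Once the Nucleolus exists, $\sum_i \mathbf{nc}_i(v_\mathcal{R})=v_\mathcal{R}(\mathcal N)$. By \eqref{equ:real_reward_value_function}, $v_\mathcal{R}(\mathcal N)$ is exactly the remuneration received for the bids in $P^*(\mathcal N)$. Hence exactly the realized reward is distributed: nothing is withheld, and no external money is injected.

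\emph{Optimality} means the DVPPs operate in a way that maximizes the expected market value of $\mathcal N$. I will argue this in three steps.
\begin{enumerate}
\item For each DVPP $\mathcal D$, the bid $b^*(\mathcal D)$ maximizes $R_\gamma$ by \eqref{equ:opt_bid}.
\item The partition $P^*(\mathcal N)$ maximizes the total expected reward over $\Pi(\mathcal N)$ by \eqref{equ:opt_partition}. Consequently $v_\mathbb{F}(\mathcal N)$ is the largest expected value that any arrangement of the DERs can attain.
\item By Proposition~\ref{lemma:ex_ante_ind_rational}, the Nucleolus lies in the core of $\mathcal{G}_\mathbb{F}$. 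So no subcoalition $S$ can obtain more than $x(S)$ by deviating, and no DER has an incentive to force a suboptimal partition. Together with Proposition~\ref{lemma:incentive_compaibility_nucleolus}, truthful reporting of $\mathcal I_i$ is guaranteed. Thus the inputs to \eqref{equ:opt_bid}--\eqref{equ:opt_partition} are correct, and the optimum computed is the true optimum.
\end{enumerate}

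The main obstacle is pinning down what ``optimality'' formally means. The allocation rule itself does not choose the operation; it must be shown not to distort the choice. I would make this precise as follows: the allocation depends on operations only through $v(\mathcal N)$, which is maximized by construction. Individual DER payoffs are therefore aligned with maximizing $v_\mathbb{F}(\mathcal N)$. Core membership then prevents any coalition from profitably breaking away to a different partition, so the optimal partition is self-enforcing.
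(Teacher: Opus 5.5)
Your proposal is correct and follows the paper's proof, which uses the same ingredients: optimality from \eqref{equ:opt_bid}--\eqref{equ:opt_partition}, with correct inputs guaranteed by the incentive compatibility of Proposition~\ref{lemma:incentive_compaibility_nucleolus}, and efficiency from the constraint $x(\mathcal N)=v(\mathcal N)$ built into the Nucleolus's feasible set. Your additions are not needed, and the ``alignment'' claim in your last paragraph should be dropped, since it tacitly assumes the Nucleolus is aggregate-monotonic (no payoff falls when only $v(\mathcal N)$ rises), which fails in general (Megiddo); likewise, moving to the pre-imputation set for $\mathcal{G}_\mathcal{R}$ is a change of solution concept that you impose, not something the paper states, since the paper never addresses nonemptiness of $X_{v_\mathcal{R}}$.
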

\begin{proof}
    The bidding strategy is optimized as outlined in
(\ref{equ:opt_bid}) and (\ref{equ:opt_partition}), and the correct parameters are revealed due to incentive compatibility shown in Proposition \ref{lemma:incentive_compaibility_nucleolus}. The Nucleolus also fulfills the efficiency property, i.e., $x(\mathcal{N})=v(\mathcal{N})$.
\end{proof}

\subsection{Shapley Value Reward Allocation} \label{subsec:shapley}

Although the Nucleolus provides a coalitional stable reward allocation under uncertainty \cite{li2015cooperative}, it offers no guarantees to fulfill \ref{C4} and \ref{C5}. To satisfy \ref{C1}-\ref{C5} in particular situations, we introduce the \emph{Shapley Value}, distributing rewards based on a player's average marginal contribution:

\begin{definition}[Shapley Value \cite{shapley1951notes}]
\label{def:shapley}
For a game $(\mathcal{N},v)$, the Shapley Value $\phi(v)=[x(i)]_{i\in\mathcal{N}}$ is defined by
\begin{equation*}
\begin{gathered}
x(i) =\textstyle\sum_{K_i\subseteq\mathcal{N}, i\in K_i}m(|K_i|)\underbrace{\left[v(K_i)-v(K_i \setminus \{i\})\right]}_{\text{marginal contribution}} \,\,, \\
m(|K_i|) = (n-|K_i|)!\cdot(|K_i|-1)!/n!, \quad \forall K_i\subseteq \mathcal{N} 
\end{gathered}
\end{equation*}
\end{definition}

It has unique game-theoretic properties \cite{shapley1951notes} allowing it to attain \ref{C1}-\ref{C5} if the game is \emph{convex}:

\begin{definition}[Convex game]
\label{def:convex_game}
A Game $(\mathcal{N},v)$ is convex iff for all players $i$ and any coalitions $S\subset K\subseteq \mathcal{N}\setminus \{i\}$:
$$v(S\cup \{i\})- v(S)\leq v(K\cup \{i\})- v(K)$$
\end{definition}

Convexity implies that the marginal contribution of a DER $i$ to a coalition increases as the coalition expands.

Conceptually, the Nucleolus sacrifices linearity and fairness in order to preserve the grand coalition $\mathcal{N}$ cooperation (\ref{C1}), compared to the Shapley Value. 

\textbf{To summarize the reward mechanism,} the \textcolor{payColor}{green} row in Figure \ref{fig:timeline} visualizes the ex-ante simulation yielding $v_\mathbb{F}$ to ensure grand coalition cooperation, the execution of the ex-post test function $h(S,b^*),\forall S\subseteq\mathcal{N}$ and finally the reward allocation $A^*(\mathcal{N},v_\mathcal{R})$ programmed to apply the Shapley Value if the game is convex and otherwise the Nucleolus.

\begin{theorem}
    For a convex Forecasted Reward Game, the Shapley Value fulfills criteria \ref{C1}-\ref{C5}\footnote{Although criteria \ref{C1}--\ref{C5} resemble the conditions of the Myerson--Satterthwaite impossibility~\cite{myerson1983efficient}, that result does not apply here: it concerns bilateral trade with \emph{non-verifiable} private valuations and an external budget constraint, whereas in our setting
delivery is verified ex-post via $h(\mathcal{D},b)$ and penalized, and transfers
redistribute the coalition's own realized reward.}.
\end{theorem}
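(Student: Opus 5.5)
I will verify the five criteria one at a time. Each verification rests on a classical property of the Shapley Value $\phi$ applied to the convex game $\mathcal{G}_\mathbb{F}$, plus the fact that $\phi$ is linear in the value function.

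\textbf{C1.} For a convex game, $\phi(v_\mathbb{F})$ lies in the $\mathrm{core}$. This is Shapley's theorem: the core is the convex hull of the marginal-contribution vectors over all orderings of $\mathcal{N}$, and $\phi$ is their uniform average. Being in the core gives $x(S)\ge v_\mathbb{F}(S)$ for every $S\subseteq\mathcal{N}$, which is coalitional stability. Taking $S=\{i\}$ gives individual rationality.

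\textbf{C3.} Efficiency $\sum_i x(i)=v(\mathcal{N})$ follows from the telescoping of marginal contributions along each ordering. Optimality is argued exactly as in Proposition~\ref{lemma:optimality}: bids and partitions are chosen via (\ref{equ:opt_bid}) and (\ref{equ:opt_partition}), so it only remains to know that inputs are reported truthfully, which is C2.

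\textbf{C2.} I would mirror the argument of Proposition~\ref{lemma:incentive_compaibility_nucleolus}. Rewards are realized ex post through $v_\mathcal{R}$, which is computed from the verified test function $h$, and a failed test incurs the penalty $q$. Suppose DER $i$ misreports $\mathcal{I}_i$. This can only distort $b^*$ and the partition away from the optimum of (\ref{equ:opt_partition}). That lowers the expected $v_\mathcal{R}(S)$ for every coalition $S$ containing $i$, and leaves coalitions without $i$ unchanged. Each marginal contribution $v(K)-v(K\setminus\{i\})$ therefore weakly decreases in expectation. Since $\phi_i$ is a nonnegative combination of these marginal contributions, $i$'s expected payoff cannot increase.

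\textbf{C4.} Fairness is read as Shapley's axioms: symmetry, the null-player property, additivity, and efficiency. $\phi$ is the unique allocation satisfying them, so C4 holds by definition of the fairness notion.

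\textbf{C5.} I use linearity. Write $v_\mathcal{R}=v_\mathbb{F}+\varepsilon$, where $\varepsilon$ is the forecast error game. Then $\phi(v_\mathcal{R})=\phi(v_\mathbb{F})+\phi(\varepsilon)$. Over repeated games, $\mathbb{E}[\varepsilon(S)]$ is a systematic bias term. Because $R_\gamma$ is the expectation of $R$ under the scenario distribution, this bias vanishes when the forecast is calibrated. So the average of $\phi(v_\mathcal{R})$ converges to $\phi(v_\mathbb{F})+\mu$, with $\mu=\phi(\mathbb{E}\varepsilon)$, by linearity of $\phi$ and the law of large numbers. The Nucleolus is not linear, which is exactly why it failed C5.

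The main obstacle is C2, and to a lesser extent the stability part of C1 ex post. Convexity is only assumed for $\mathcal{G}_\mathbb{F}$, yet payments are made according to $v_\mathcal{R}$. I will therefore state stability ex ante for $v_\mathbb{F}$, as in Proposition~\ref{lemma:ex_ante_ind_rational}. For the ex-post allocation I rely on the switching rule of Figure~\ref{fig:timeline}, which applies $\phi$ only when $\mathcal{G}_\mathcal{R}$ is itself convex and hence $\phi(v_\mathcal{R})$ lies in its core. For C2, the monotonicity claim that misreporting lowers every marginal contribution in expectation has to be made precise, and I would reuse the appendix argument for the Nucleolus.
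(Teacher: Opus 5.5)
Your proposal is correct and follows essentially the same route as the paper: core membership of $\phi$ for convex games (C1), efficiency plus the optimal bidding of Proposition~\ref{lemma:optimality} (C3), and the argument of Appendix~\ref{sec:appendix_proof_c2_shapley} that misreporting only lowers the values of coalitions containing the misreporting DER, and hence its marginal contributions (C2). C5 likewise follows from linearity of $\phi$ applied to $\mathbb{E}[v_\mathcal{R}]=v_\mathbb{F}+\mu$; the remaining differences are presentational: you justify C4 via Shapley's axiomatic uniqueness rather than the paper's marginal-contribution-as-effort reading, your $\mu$ sits at the allocation level (it is the paper's $\phi(\mu)$), and you are more explicit than the paper that convexity of $\mathcal{G}_\mathbb{F}$ alone only delivers ex-ante stability.
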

\begin{proof}
The Shapley Value for a convex game lies in the $\mathrm{core}$ \cite{shapley1971cores}, thus ensuring coalitional stability and individual rationality (\ref{C1}). 

The proof for \ref{C2} is presented in Appendix-\ref{sec:appendix_proof_c2_shapley}.

\ref{C3} is met because the Shapley Value fulfills the efficiency property \cite{shapley1951notes} and the DVPP operates optimally via Prop. \ref{lemma:optimality}.

The Shapley Value assigns to each DER its weighted marginal contribution. This allocation is widely recognized as a fairness criterion: it ensures that individual \emph{effort} is proportionally rewarded, given an equal opportunity to participate (a \emph{level playing ground}) \cite{algaba2019shapley, sharing_the_grid, fenton2021fair}. Thus, \ref{C4} is satisfied.

\label{def:bias_mu}
We define ex-post consistency as the  alignment between forecasted and realized rewards over repeated games, accounting for a systematic bias $\mu$ (\ref{C5}).
Assuming the forecasted scenarios have bias $\mu\in\mathbb{R}^{|\mathcal{P}(\mathcal{N})|}$, the expected value of $v_\mathcal{R}$ (\ref{equ:real_reward_value_function}) aligns with the forecasted game, i.e., $\mathbb{E}[v_\mathcal{R}] = v_\mathbb{F} + \mu$. Consequently, criterion \ref{C5} is satisfied due to the linearity of the Shapley Value:
\begin{equation}
    \mathbb{E}[\phi(v_\mathcal{R})]=\phi(\mathbb{E}[v_\mathcal{R}])=\phi(v_\mathbb{F}) + \phi(\mu)
\end{equation}
\end{proof}

\section{Case Study}\label{sec:case_study}

To illustrate the results, we simulate the operation of a set of DERs $\mathcal{N}$ within the Finnish power grid. Finland was selected due to its established FFR market and relevant open-source datasets \cite{fingrid_open_data}.
The units in $\mathcal{N}$ are scaled to approximately 10\% capacity of the existing Ilmatar Hybrid Power Plant \cite{ilmatar2025hybrid}, in particular, a wind power plant (Wind) rated 21.6\,MW, a solar PV plant (PV) rated 15\,MW and a BESS rated 15\,MW.

For a full week, $\mathcal{N}$ participated in 50\% FFR and 50\% $\text{FCR-D}^\text{up}$ control (combined service). The specific timeframe (2025-04-06/12) was selected to include predominantly nonzero FFR market prices. We assume that the DVPPs $\mathcal{D}\in P^*(\mathcal{N})$ must at any point in time be able to react to a large activation signal of $\Delta f=-0.4$ Hz ($f=49.6\text{ Hz}$), requiring a very fast activation time of 1s for FFR and 7.5s for $\text{FCR-D}^\text{up}$ \cite{FFR_technical, fcr_nordics}. As outlined in Section \ref{sec:problem_setup}, the system operator (Fingrid) can simulate whether the constituent DERs are capable of meeting their bid requirement.

The power output of the DERs $i\in\mathcal{N}$ is modeled as a function of DER parameters (power rating, etc.), the environmental conditions (wind speed, solar irradiation, etc.) and the control design (presented in Section \ref{sec:dvpp_framework}). The penalty for non-available capacity is three times the potential reward, i.e. $q=3\pi$ \cite{FFR_finland, fcr_nordics}.
The corresponding parameters are listed in Table \ref{tab:dvpp_params} and DER models in Appendix-\ref{sec:appendix_case_study}. The code for the case study is available on Github. 
\begin{figure}[t!]
    \centering
    \includegraphics[width=1\linewidth]{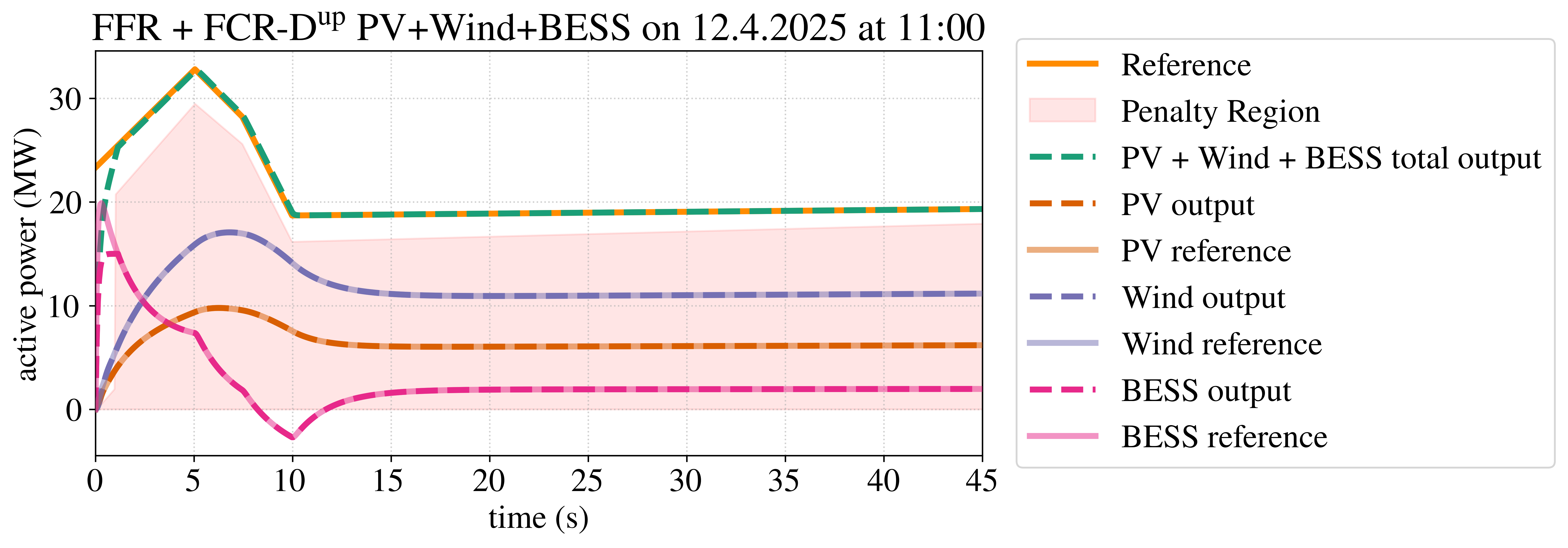}
            \vspace{-.6cm}
    \caption{Performance of the DVPP grand coalition $\mathcal{N}$ on 2025-04-12 at 11:00 to the simulated activation signal $\Delta f=-0.4$ Hz. The \textcolor{mplred}{red-shaded} penalty region represents the requirements for a bid split equally between FFR and $\text{FCR-D}^\text{up}$. The DERs follow their reference nearly perfectly and the test function (\ref{equ:test_function}) is passed, because the total output (\textcolor{mplgreen}{green} line) avoids the penalty region.}
    \label{fig:FFR-FCR_PV+Wind+BESS_DPF_1}
            \vspace{-.4cm}
\end{figure}

\subsection{Illustrative Reward Allocation}
A representative DVPP active power injection profile after a simulated step frequency disturbance is shown in Figure \ref{fig:FFR-FCR_PV+Wind+BESS_DPF_1}. The total output (\textcolor{mplgreen}{green}) must not enter the \textcolor{mplred}{red-shaded} area to avoid a penalty. PV and Wind are selected as low-pass-filter DERs, while the BESS covers the high-pass-filter domain. 

To illustrate the Forecasted and Realized Reward Game, $\mathcal{G}_\mathbb{F}$ and $\mathcal{G}_\mathcal{R}$, for this exact operation, Table \ref{tab:non_empty_core_game} lists the value function on 2025-04-12 at 11:00. The optimization process (\ref{equ:opt_bid}) selects an optimal grand-coalition bid of 33\,MW, submitted equally between FFR and $\text{FCR-D}^\text{up}$. The games are convex (fulfill \ref{def:convex_game}) and the Shapley Value is applied to satisfy criteria \ref{C1}-\ref{C5}.

\begin{table}[t!]
  \centering
  \caption{$v(S)$ for Forecasted ($\mathcal{G}_\mathbb{F}$) and Realized ($\mathcal{G}_\mathcal{R}$) Reward Game in Euros with $\pi=28.5$€/MW on 2025-04-12 at 11:00.}
  \label{tab:non_empty_core_game}
  \resizebox{\columnwidth}{!}{%
    \begin{tabular}{@{}l|lllllll@{}}
    \toprule
     $v(S)$ & PV & Wind & BESS & (PV, Wind) & (PV, BESS) & (Wind, BESS) & $\mathcal{N}$ \\
    \midrule
    $\mathcal{G}_\mathbb{F}$ & 263 & 370 & 107 & 632 & 401 & 618 & 939 \\
    $\mathcal{G}_\mathcal{R}$ & 294 & 444 & 107 & 738 & 401 & 618 & 939 \\
    \bottomrule
    \end{tabular}%
  }
  \vspace{-.2cm}
\end{table}

\begin{table}[t!]
  \centering
  \caption{Shapley Value $\phi$ and (hypothetical) Nucleolus $\mathbf{nc}$ allocation of $\mathcal{G}_\mathbb{F}$ and $\mathcal{G}_\mathcal{R}$ on 2025-04-12 at 11:00.}
  \label{tab:reward_allocation}
  \begin{minipage}[t]{0.48\linewidth}
  \vspace{0pt} 
    \centering
    \begin{tabular}{l|lll}
    \toprule
    $\phi$ & PV & Wind & BESS \\
    \midrule
    $\mathcal{G}_\mathbb{F}$ & 287 & 450 & 202 \\
    $\mathcal{G}_\mathcal{R}$ & 303 & 487 & 149 \\
    \bottomrule
    \end{tabular}
  \end{minipage}\hfill
  \begin{minipage}[t]
  {0.48\linewidth}
    \centering
    \color{black!45}
    \begin{tabular}{l|lll}
    \toprule
    $\mathbf{nc}$ & PV & Wind & BESS \\
    \midrule
    $\mathcal{G}_\mathbb{F}$ & 292 & 454 & 193 \\
    $\mathcal{G}_\mathcal{R}$ & 307 & 484 & 147 \\
    \bottomrule
    \end{tabular}
  \end{minipage}
  \vspace{-0.2cm}
\end{table}

\begin{table}[t!]\label{tab:game_categorization} 
  \centering 
  \caption{Type of allocation used in the period 2025-04-06/12.}
  \label{tab:game_split}
    \begin{tabular}{l|lll}
    \toprule
    & Shapley Value & Nucleolus & Nucleolus \\
    & (convex) & (non-empty $\mathrm{core}$) & (empty $\mathrm{core}$) \\
    \midrule
    $\mathcal{G}_\mathbb{F}$ & 161 & 1 & 0 \\
    $\mathcal{G}_\mathcal{R}$ & 138 & 11 & 13 \\
    \bottomrule
    \end{tabular}%
    \vspace{-0.4cm}
\end{table}

The Shapley Value $\phi$ determines the rewards in the left Table \ref{tab:reward_allocation}. The BESS benefits the most from cooperation compared to acting alone due to its fast response time, allocating a reward $\phi(\text{BESS})$ exceeding its standalone value $v(\{\text{BESS}\})$ by 40\%. Wind captures a higher share of $\mathcal{G}_\mathcal{R}$  compared to $\mathcal{G}_\mathbb{F}$, because it has a higher probability of non-delivery (decreasing $\mathcal{G}_\mathbb{F}$) but could deliver reliably at this specific time (increasing $\mathcal{G}_\mathcal{R}$). A similar $\mathcal{G}_\mathbb{F}/\mathcal{G}_\mathcal{R}$ trend applies to PV but in smaller magnitude because the reliability of PV is higher at this point in time. 

As a comparison, the hypothetical Nucleolus reward allocation is \textcolor{black!45}{listed in greyscale} in Table \ref{tab:reward_allocation}. It allocates slightly more reward to the PV as compared to the Wind and BESS, due to the dissatisfaction minimization objective.

\subsection{Results for 2025-04-06/12}
The hourly average individual value vs. cooperative reward for the entire week is visualized in Figure \ref{fig:values_rewards_DERs}. Every DER gains more from DVPP cooperation compared to standalone operation, resulting in the cooperative reward $v_\mathcal{R}(\mathcal{N})$ surpassing
 the summed individual values of the DERs, $\sum_{i\in\mathcal{N}}v_\mathcal{R}(\{i\})$, by 16\%. The reward is not distributed equally as the device capabilities (power ratings, time constants, etc.) govern the reward allocation.

PV and Wind receive higher rewards in $\mathcal{G_R}$ compared to $\mathcal{G}_\mathbb{F}$, while the opposite is true for the BESS. This can be explained by their probabilistic reward (\ref{equ:generic_prob_reward}) underestimating their real reward (\ref{equ:real_reward_value_function}) due to a forecast error bias $\mu$.

Optimization of $R_\gamma$ (\ref{equ:generic_prob_reward}) for the grand coalition $\mathcal{N}$, \textit{always} yields the most conservative bid selection, i.e., $\gamma_\mathcal{N}(b)=100\%$, due to the large penalty $q=3\pi$. Thus, the bid optimization (\ref{equ:opt_bid}) for $\mathcal{N}$ can be simplified to a risk-averse bidding strategy in our case. This is not true for the subcoalitions $S\subset\mathcal{N}$, where a riskier bidding strategy can be more profitable.

Next, the 162 non-zero price hours of the same week can be categorized into Shapley Value applied and Nucleolus applied, listed in Table \ref{tab:game_split}\footnote{An empty $\mathrm{core}$ arises ex-post when a DER bids risk-affine: its realized standalone value can then
exceed its marginal contribution to a coalition bidding risk-averse, so some
single DER (or sub-coalition) could realize a higher reward. This only
occurs in $\mathcal{G}_\mathcal{R}$ due to ex-ante uncertainty, not in $\mathcal{G}_\mathbb{F}$.}. In a large majority of settings, $\mathcal{G}_\mathbb{F}$ and $\mathcal{G}_\mathcal{R}$ are convex (Definition \ref{def:convex_game}). This can be interpreted as our DVPP of size $|\mathcal{N}|=3$ exhibiting a positive \emph{network effect} \cite{network_effects}, yielding increasing returns as the coalition grows. In the non-convex games (15\% of total), the Nucleolus minimizes the dissatisfaction among the DERs (\ref{equ:nucleolus_def}).

A known limitation of computing the reward allocations is scalability for larger $|\mathcal{N}|$: evaluating $v(S)$ over all $2^{|\mathcal{N}|}$
coalitions is computationally expensive. This is increasingly tractable rather than prohibitive: recent
algorithms compute the Nucleolus efficiently for moderately large games
\cite{benedek2021nucleolus}, and the Shapley Value admits sampling-based
approximation with bounded error \cite{castro2009polynomial}. Scaling the
mechanism to large sets of DERs via these methods is left for future work.

\begin{figure}[t!]
    \centering
    \includegraphics[width=\linewidth]{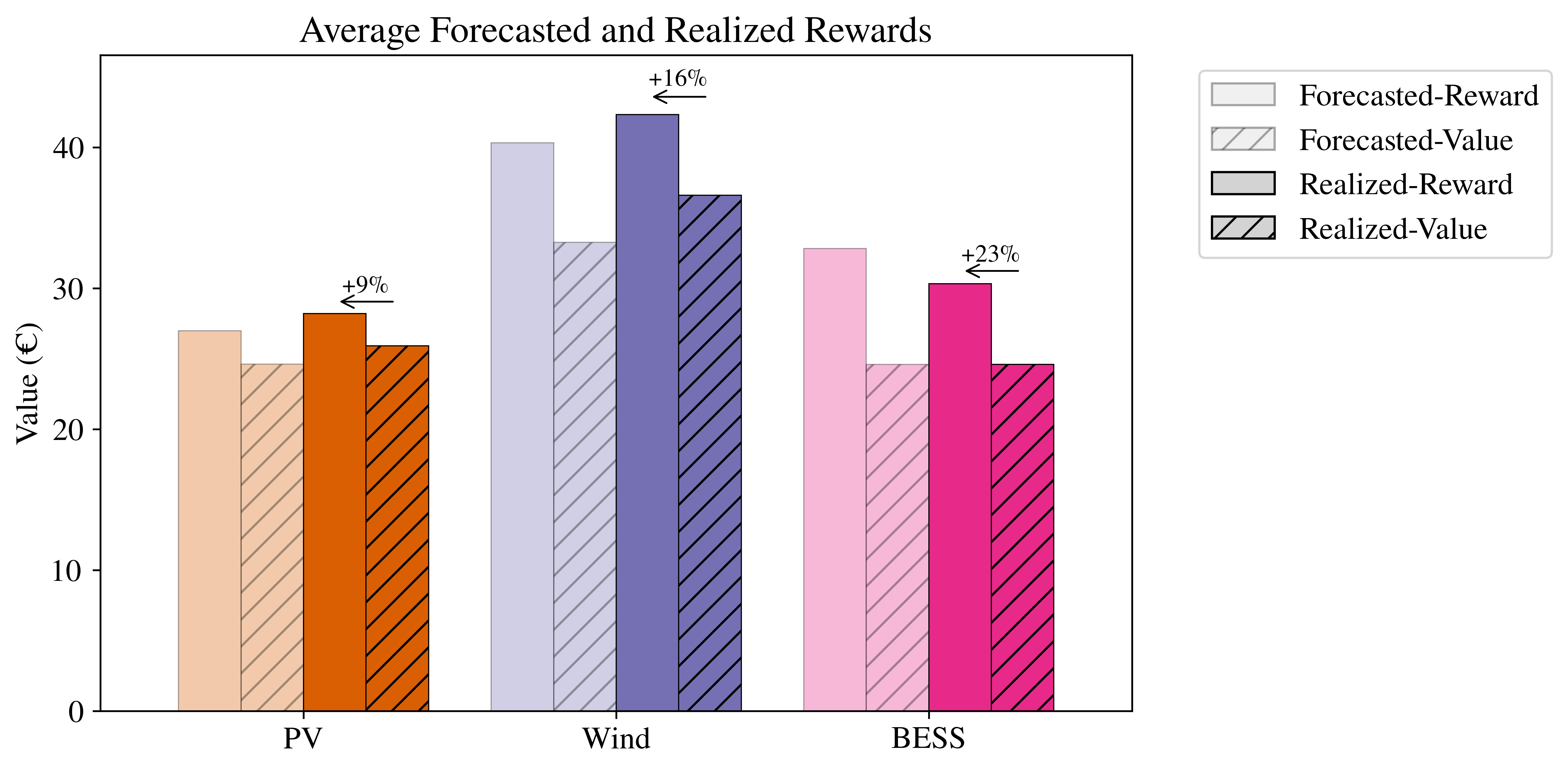}
    \vspace{-2em}
    \caption{Average hourly reward of cooperation (solid bars) vs. value of acting alone (hatched bars) are visualized for the week 2025-04-06/12. The reward of cooperation exceeds the value of acting alone for all DERs and for both the Forecasted and Realized Reward Game, $\mathcal{G}_\mathbb{F}$ and $\mathcal{G}_\mathcal{R}$. The average is computed for all 168h of the week. The arrows $\leftarrow$ indicate the reward percentage increase for every device from cooperation.}
    \label{fig:values_rewards_DERs}
    \vspace{-0.2cm}
\end{figure}

\begin{figure}[t!]
    \centering
    \includegraphics[width=.9\linewidth]{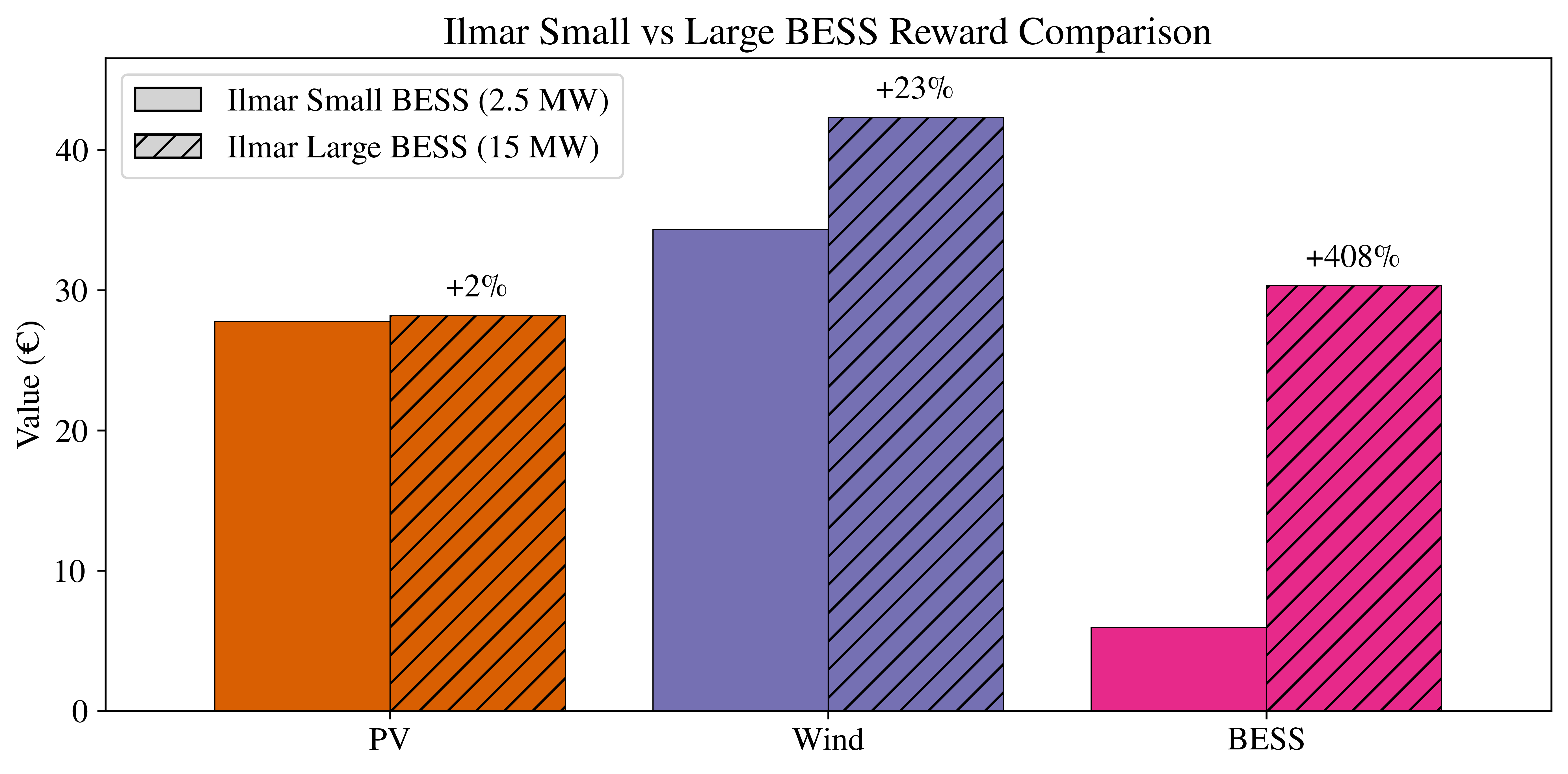}
    \caption{DERs average hourly reward comparison of small (2.5\,MW) vs. large (15\,MW) BESS for the week 2025-04-06/12.}
    \label{fig:Ilmar_vs_large_bess}
            \vspace{-.4cm}
\end{figure}

Increasing DER capacity impacts rewards non-linearly (Figure \ref{fig:Ilmar_vs_large_bess}). Compared to a 6x smaller BESS, the 15\,MW BESS grows its reward by 408\%, but also lifts allocation to Wind by improving Wind's coalitional value through increased reliability. However, PV benefits minimally due to its higher baseline reliability.

\section{Conclusion}\label{sec:conclusion}

This paper presents a reward allocation and control scheme for DERs to cooperatively provide DAS. We established a novel benchmark (\ref{C1}-\ref{C5}) and determined that while criteria \ref{C1}-\ref{C3} are always fulfilled, the convexity of $\mathcal{G}_\mathbb{F}$ and $\mathcal{G}_\mathcal{R}$ is required to satisfy the full set of criteria. The work may incentivize DER cooperation to provide DAS, resulting in both improved power system dynamics and DER economics, as shown by the \textit{cooperative gain} of 16\% compared to standalone operation. Future work can focus on optimizing the ADPF design, integrating multiple services (e.g., FFR, voltage control), investigating larger case studies involving more agents (DERs), more complex grid topologies, and validating larger DVPP configurations.

\section*{Acknowledgments}
The authors acknowledge the use of Gemini during the preparation of this work. 
This tool was employed for spell-checking and debugging suggestions for the codebase. 

\appendix 

\subsection{Game Theory Definitions}\label{sec:game_theory_definitions}

Let $x(i)$ be the \emph{payoff} to DER $i$, then $x=[x(i),...,x(m)]$ is the payoff vector to all DERs and $x(S)=\sum_{i\in S}x(i)$ is the summed payoff to a coalition $S$. 
The allocations satisfying \emph{efficiency} $\left(x(N)= v(N)\right)$ and \emph{individual rationality} $\left(x(i)\geq v(i),\forall i\in\mathcal{N}\right)$ are denoted as $X_v$ and are defined in (\ref{equ:allocations_ind_rational}). The $\mathrm{core}$ is the set of all efficient payoffs that grant coalitional rationality:
\begin{equation}\label{equ:core}
    \mathrm{core}(v):=\left\{x\in X_v: \sum_{i\in S}x(i)\geq v(S),\quad\forall S\subseteq \mathcal{N}\right\}
\end{equation}

To lay out the fundamentals of the Nucleolus, the \emph{excess} of a coalition, $e_S(x)=x(S)-v(S)$, can be interpreted as the "satisfaction" of coalition $S$ (higher $e_S\rightarrow$ higher satisfaction). Let $N'=\{S\mid S\subset \mathcal{N}, S\neq\emptyset\}$ and define the vector $e(x)=[e_S(x)]_{S\subseteq N'}$. Then $e^*(x)$ is the permutation of vector $e(x)$ in increasing order; thus the first entries are the coalition(s) with the highest dissatisfaction (lowest $e_S(x)$). To compare vectors regarding their preference for all coalitions, lexmin superiority can be employed:

\begin{definition}[lexmin superior]
    $e(x)$ is lexmin superior to $e(y)$, denoted by $e(x)\succ_{lex} e(y)$, if $e^*(x)$ is lexicographically  superior to $e^*(y)$ i.e. there exists $l'+1\in\{1,2,...,2^n-2\}$ s.t. $e_l^*(x)=e_l^*(y)$ for $l=1,2,...,l'$ and $e^*_{l'+1}(x) > e^*_{l'+1}(y)$. \\
\end{definition}

With this setup, the Nucleolus can be defined as in equation (\ref{equ:nucleolus_def}) and interpreted as minimizing the maximum dissatisfaction over all coalitions. Importantly, the Nucleolus always lies in the $\mathrm{core}$ if the $\mathrm{core}$ is non-empty \cite{nucl_kernel}.

\subsection{Proof of Non-Empty $\mathrm{core}$ the Forecasted Reward Game}\label{sec:appendix_proof_non_empty_core}

In this subsection, we prove coalitional stability by demonstrating a \textit{non-empty} $\mathrm{core}$, defined in (\ref{equ:core}), for the game $\mathcal{G}_\mathbb{F}$.

First, we define a \textit{balanced set} of weights $\lambda$ for every coalition $S\subseteq\mathcal{N}$, i.e., $\lambda: 2^\mathcal{N} \to [0,1]$, which satisfies $\sum_{S\ni i} \lambda_S = 1$ for all $i \in \mathcal{N}$. Intuitively, a balanced set maps the feasible operation of each DER $i$ to possible subsets $S_i\subseteq \mathcal{N}$ where $i\in S_i$.
Next, we introduce the Bondareva–Shapley Theorem:

\begin{theorem}\label{thm:bondareva_shapley_gen} Bondareva–Shapley Theorem \cite{bondareva1963some, shapley1965balanced} \\
    The $\mathrm{core}$ of a cooperative game $(\mathcal{N}, v)$ is non-empty if and only if the game is balanced. That is, for every balanced set $\lambda$, the following condition holds:
    \begin{equation}\label{equ:balanced_inequality}
        \textstyle\sum_{S \subseteq \mathcal{N}} \lambda_S v(S) \leq v(\mathcal{N}).
    \end{equation}
\end{theorem}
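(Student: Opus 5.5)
The plan is to prove the Bondareva–Shapley Theorem by linear programming duality. It reduces non-emptiness of the $\mathrm{core}$ \eqref{equ:core} to the optimal value of one LP, and balancedness to the optimal value of its dual. Consider the primal program
\begin{equation*}
  (P):\quad \min_{x\in\mathbb{R}^{|\mathcal{N}|}} \; x(\mathcal{N}) \quad \text{s.t.}\quad x(S)\geq v(S)\;\; \forall S\subseteq\mathcal{N},\, S\neq\emptyset .
\end{equation*}
I will first check that the $\mathrm{core}$ is non-empty if and only if the optimal value of $(P)$ satisfies $\mathrm{val}(P)\leq v(\mathcal{N})$.

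For the first direction, any $x\in\mathrm{core}(v)$ is feasible for $(P)$ and has $x(\mathcal{N})=v(\mathcal{N})$; individual rationality is the special case $S=\{i\}$. For the converse, I need that $(P)$ attains its minimum. It does, because it is feasible (take every $x(i)$ large) and bounded below by the constraint for $S=\mathcal{N}$. That constraint also forces $\mathrm{val}(P)\geq v(\mathcal{N})$. So if $\mathrm{val}(P)\leq v(\mathcal{N})$, a minimizer $x^*$ has $x^*(\mathcal{N})=v(\mathcal{N})$ and satisfies every coalitional constraint, hence $x^*\in\mathrm{core}(v)$.

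Next I write the dual of $(P)$. It has one multiplier $\lambda_S\geq 0$ per nonempty coalition. Because the primal variables $x(i)$ are free, each primal column gives an equality constraint in the dual:
\begin{equation*}
  (D):\quad \max_{\lambda\geq 0}\; \textstyle\sum_{S\subseteq\mathcal{N}}\lambda_S v(S)\quad\text{s.t.}\quad \textstyle\sum_{S\ni i}\lambda_S = 1\;\;\forall i\in\mathcal{N}.
\end{equation*}
The feasible set of $(D)$ is exactly the set of balanced weights introduced above. The bound $\lambda_S\leq 1$ in the paper's definition is automatic from nonnegativity and the equality constraints, so nothing is lost by stating it. The dual is also feasible, for example with $\lambda_{\mathcal{N}}=1$ and all other weights zero. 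Since both $(P)$ and $(D)$ are feasible, the strong duality theorem of linear programming gives $\mathrm{val}(P)=\mathrm{val}(D)$, with both optima attained.

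Combining the two steps, the $\mathrm{core}$ is non-empty if and only if $\max_\lambda \sum_S \lambda_S v(S)\leq v(\mathcal{N})$. This is precisely the statement that \eqref{equ:balanced_inequality} holds for every balanced $\lambda$.

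The main obstacle is bookkeeping rather than any deep step. It consists of deriving $(D)$ correctly, with free primal variables giving equality constraints and inequality primal constraints giving sign-constrained multipliers, and making sure the $\mathrm{core}$'s efficiency requirement is recovered from the minimization. If one prefers to avoid citing strong duality, I would instead use Farkas' lemma on the system $x(S)\geq v(S)$ for all $S$, together with $-x(\mathcal{N})\geq -v(\mathcal{N})$. Infeasibility of that system produces nonnegative multipliers that, after normalizing the coefficient on the $-x(\mathcal{N})$ row, yield a balanced $\lambda$ with $\sum_S\lambda_S v(S)>v(\mathcal{N})$.
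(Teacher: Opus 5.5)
Your LP-duality proof is correct and complete. $(P)$ is feasible and bounded below by its $S=\mathcal{N}$ constraint, so its value is attained and is at least $v(\mathcal{N})$. The dual of $\min\{\mathbf{1}^\top x : Ax\ge (v(S))_S,\ x \text{ free}\}$ is indeed $\max\{\sum_S\lambda_S v(S) : A^\top\lambda=\mathbf{1},\ \lambda\ge 0\}$. Its feasible set coincides with the balanced weights; the weight on $\emptyset$ and the cap $\lambda_S\le 1$ are immaterial because $v(\emptyset)=0$. Strong duality then closes the argument.

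Your route necessarily differs from the paper's, because the paper never proves this theorem. It cites Bondareva and Shapley, and the proof block after the statement only \emph{applies} the sufficiency direction to show that $\mathcal{G}_\mathbb{F}$ is balanced. That argument is model-specific:
\begin{enumerate}
\item for a balanced $\lambda$, it forms the candidate bid $\hat{b}(\mathcal{N})=\sum_S\lambda_S b^*(S)$ and asserts physical feasibility from $\sum_{S\ni i}\lambda_S=1$;
\item it \emph{assumes} the risk-pooling bound $\gamma_\mathcal{N}(\hat{b})\ge\bar{\gamma}$;
\item it rewrites $\sum_S\lambda_S v_\mathbb{F}(S)$ as the probabilistic reward of $\hat{b}(\mathcal{N})$ at reliability $\bar{\gamma}$;
\item it bounds this by $v_\mathbb{F}(\mathcal{N})$, using monotonicity in $\gamma$ and optimality of the bid and partition.
\end{enumerate}
The two are complementary. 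Yours supplies the general equivalence, valid for every $v$, which the paper takes on citation. The paper's supplies the hypothesis check for its particular game, and that check rests on modeling assumptions rather than LP theory.

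Two small remarks. First, only the \emph{if} direction needs duality. For $x$ in the core, $\sum_S\lambda_S v(S)\le\sum_S\lambda_S x(S)=\sum_i x(i)\sum_{S\ni i}\lambda_S=x(\mathcal{N})=v(\mathcal{N})$. Second, in your Farkas variant you should justify that the multiplier on the $-x(\mathcal{N})$ row is strictly positive before normalizing. If it were zero, the column conditions would give $\sum_{S\ni i}y_S=0$ for every $i$, where $y_S\ge 0$ are the multipliers on the rows $x(S)\ge v(S)$. Since each nonempty $S$ contains some player, this forces every $y_S=0$, contradicting $\sum_S y_S v(S)>0$.
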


\begin{proof}
Consider the bidding process outlined in Sections \ref{sec:problem_setup} and \ref{sec:results}, and any balanced set $\lambda$. The achieved bids outlined in the Scenario Approach are denoted by $b_k(S)$ for coalition $S$ and scenario $k$. Let $b^*(S)$ be the optimal bid for coalition $S$ and $\gamma_S^*$ the corresponding reliability parameter.

We construct a candidate bid for the grand coalition, $\hat{b}(\mathcal{N})$, as the weighted sum of the optimal sub-coalition bids:
\begin{equation}
    \hat{b}(\mathcal{N}) = \textstyle\sum_{S \subseteq \mathcal{N}} \lambda_S b^*(S).
\end{equation}
This aggregate bid is feasible because $\sum_{S\ni i} \lambda_S = 1$, ensuring the underlying device actions $\hat{u}_i^\mathcal{N}=\sum_{S\subseteq\mathcal{N}}\lambda_S u_i^S$ respect all physical constraints. Due to this feasible operation, we assume the grand coalition $\mathcal{N}$ can manage the aggregated bid $\hat{b}(\mathcal{N})$ with a reliability level $\gamma_\mathcal{N}$ that is at least equal to the capacity-weighted average reliability of the sub-coalitions. Let $\bar{\gamma}$ denote this weighted average:
\begin{equation}
    \bar{\gamma} = \frac{\sum_{S \subseteq \mathcal{N}} \lambda_S b^*(S) \gamma_S^*}{\sum_{S \subseteq \mathcal{N}} \lambda_S b^*(S)}.
\end{equation}
We assume that $\gamma_\mathcal{N}(\hat{b})\geq \bar{\gamma}$. This is physically justified by the risk pooling effect of $\mathcal{N}$ vs. $S,S\subseteq\mathcal{N}$.

Now, we evaluate the weighted sum of the sub-coalition values:
\begin{align*}
    \sum_{S\subseteq \mathcal{N}} \lambda_S v_\mathbb{F}(S) &= \sum_{S\subseteq \mathcal{N}} \lambda_S \left( \pi b^*(S) \gamma_S^* - q b^*(S) (1-\gamma_S^*) \right) \\
    &= \pi \underbrace{\sum_{S\subseteq \mathcal{N}} \lambda_S b^*(S) \gamma_S^*}_{\hat{b}(\mathcal{N}) \cdot \bar{\gamma}} - q \underbrace{\sum_{S\subseteq \mathcal{N}} \lambda_S b^*(S) (1-\gamma_S^*)}_{\hat{b}(\mathcal{N}) \cdot (1-\bar{\gamma})} \\
    &= \hat{b}(\mathcal{N}) \left[ \pi \bar{\gamma} - q (1-\bar{\gamma}) \right].
\end{align*}
This expression represents the reward of the bid $\hat{b}(\mathcal{N})$ with the average reliability $\bar{\gamma}$.
Since the reward function $R_\gamma$ is monotonically increasing with reliability $\gamma$ and using the fact that the forecasted value $v_\mathbb{F}(\mathcal{N})$ \eqref{equ:forecast_value_N} exceeds the candidate probabilistic reward $R_\gamma(\mathcal{N},\hat{b}(\mathcal{N}),\mathcal{I_N})$:
\begin{align*}
    \sum_{S\subseteq \mathcal{N}} \lambda_S v_\mathbb{F}(S)
    &\leq R_\gamma(\mathcal{N}, \hat{b}(\mathcal{N}),\mathcal{I_N}) \Big|_{\gamma=\gamma_\mathcal{N}} \\
    &\leq \sum_{\mathcal{D}\in P^*(\mathcal{N})} R_\gamma(\mathcal{D}, b^*(\mathcal{D}),\mathcal{I}_\mathcal{D}) = v_\mathbb{F}(\mathcal{N}).
\end{align*}
Thus, the condition holds, and the $\mathrm{core}$ is non-empty.
\end{proof}

\subsection{Proof of incentive compatibility of the Nucleolus} \label{sec:appendix_proof_c2_nucl}
\begin{proof}
Let $\tilde{\omega}$ (tilde) denote the variables for the case where DER acts untruthfully.
Assume a DER $j$ manipulates its reported achieved bids $\tilde{b}_k$. Consequently, the bids $\tilde{b}(C),\,j\in C\subseteq \mathcal{N}$ are non-optimal for all coalitions $j$ is a member of. 
This will, in expectation, lead to a lower realized reward for all coalitions $\mathbb{E}[\tilde{v}_\mathcal{R}(C)]< \mathbb{E}[v_\mathcal{R}(C)],\,j\in C\subseteq\mathcal{N}$, because the ex-ante submitted bids are not optimal and thus the ex-post realized performance is worse. 
As a direct effect, a lower overall distributable reward $\tilde{v}_\mathcal{R}(\mathcal{N})<v_\mathcal{R}(\mathcal{N})$ is available in expectation. 
Further, the Nucleolus minimizes the maximum dissatisfaction over all coalitions. By lowering the value $\forall C$, the satisfaction of $C$ with respect to the truthful reward $x(C)$, i.e. $x(C)-\tilde{v}_\mathcal{R}(C)$, increases. 
Recall now that the distributable reward  $\tilde{v}_\mathcal{R}(\mathcal{N})=\sum_{i\in \mathcal{N}}\tilde{x}(i)$ has decreased compared to the truthful case.
This leads to at least one payoff $\tilde{x}(i)$ to decrease compared to the truthful case $x(i)$. 
The decision on which DER payoff to decrease is based on the satisfaction $\tilde{x}(S)-\tilde{v}_\mathcal{R}(S),\forall S\subseteq\mathcal{N}$. Since the satisfaction of all coalitions that $j$ is not a member of, i.e. $K\subseteq\mathcal{N},j\notin K$, did not change, the candidate coalitions for whom to decrease satisfaction are the coalitions $C$. Since $j\in C$, the payoff $\tilde{x}(j)$ decreases to move the satisfaction $\tilde{x}(C)-\tilde{v}_\mathcal{R}(C)$ closer to $x(C)-v_\mathcal{R}(C)$, because of the maximum dissatisfaction the Nucleolus aims to minimize. Thus, $j$ is incentivized to act truthfully\footnote{Intuitively: an untruthful report only shrinks the pie, i.e., it lowers the realized value of every coalition $C\ni j$ while leaving coalitions
without $j$ unchanged. Since the Nucleolus distributes based on coalitional satisfaction $x(S)-v_\mathcal{R}(S)$, and only the coalitions containing $j$ lose value, the burden of the reduced pie falls on $j$'s own payoff. Lying thus hurts efficiency without improving $j$'s relative standing, so $j$ receives the
same or less.}.
\end{proof}

\subsection{Proof of incentive compatibility of the Shapley Value} \label{sec:appendix_proof_c2_shapley}

\begin{proof}
For \ref{C2}, assume a DER $j$ manipulates its reported achieved bids $\tilde{b}_k$. Consequently, the bids $\tilde{b}(C)$ are also manipulated for all coalitions where $j \in C \subseteq \mathcal{N}$. This results in a sub-optimal operation point, leading to a lower expected realized reward $\mathbb{E}[\tilde{v}_\mathcal{R}(C)] < \mathbb{E}[v_\mathcal{R}(C)]$ for all coalitions containing $j$. Crucially, the value of coalitions not containing $j$ remains unaffected (i.e., $\tilde{v}_\mathcal{R}(S) = v_\mathcal{R}(S)$ for $j \notin S$). The Shapley Value $x(j)$ is defined as the weighted average of DER $j$'s marginal contributions to all possible coalitions. Since the value of any coalition including $j$ has decreased, while the value of the coalitions without $j$ remains constant, the marginal contribution term $[\tilde{v}_\mathcal{R}(S \cup \{j\}) - \tilde{v}_\mathcal{R}(S)]$ is strictly smaller than in the truthful case. Consequently, the allocated reward $\tilde{x}(j)$ decreases. Thus, it is in $j$'s interest to act truthfully.
\end{proof}

\subsection{Case Study Appendix}\label{sec:appendix_case_study}

Importantly, we base the DVPP $\mathcal{D}$ reward based on the capacity of the DVPP to supply its capacity bid. This involves simulating a step disturbance on their own frequency measurement and assessing $h(\mathcal{D},b)$ on the DAS requirements: 7.5s activation time for 86\% FCR-$\text{D}^\text{up}$ (1s for FFR) and 20 minutes duration for FCR-$\text{D}^\text{up}$ (5s for FFR) \cite{FFR_finland, fcr_nordics}. 

The parameters of the case study are listed in Table \ref{tab:dvpp_params}. The reference curve and penalty region are taken from Fingrid's specifications \cite{FFR_technical, fcr_nordics}. The functions are as follows:
\begin{align}\label{equ:ffr_tf}\footnotesize
Y^\text{FFR}(t)=\begin{cases}
    1/D_p, & \text{if }t\leq 5\text{s}\\
    1/D_p\cdot\frac{10\text{s}-t}{5\text{s}}, & \text{if }5\text{s}<t\leq 10\text{s}\\
    0 & \text{otherwise}
\end{cases}
\end{align}

\begin{align}\label{equ:fcrd_tf}\footnotesize
Y^{\text{FCR-D}^\text{up}}(t)=\begin{cases}
    1/D_p \cdot \frac{t}{7.5\text{s}}, & \text{if } t \leq 7.5\text{s} \\
    1/D_p \left( 0.86 + 0.19 \frac{t - 7.5\text{s}}{22.5\text{s}} \right), & \text{if } 7.5\text{s} < t \leq 30\text{s} \\
    1/D_p, & \text{if } t > 30\text{s}
\end{cases}
\end{align}

\begin{align}\label{equ:penalty_regions}
& \rho^\text{FFR}(t)=\begin{cases}
    0, & \text{if }t\leq 1\text{s}\\
    1/D_p, & \text{if }1\text{s}<t\leq 5\text{s}\\
    1/D_p\cdot\frac{10\text{s}-t}{5\text{s}}, & \text{if }5\text{s}<t\leq 10\text{s}\\
    0 & \text{otherwise}
\end{cases} \\
\label{equ:penalty_region_fcrd}
&\rho^{\text{FCR-D}^\text{up}}(t)=\begin{cases}
    0.86/D_p \cdot \frac{t}{7.5\text{s}}, & \text{if } t \leq 7.5\text{s} \\
    \left( 0.86 + 0.14 \frac{t - 7.5\text{s}}{22.5\text{s}} \right)/D_p, & \text{if } 7.5\text{s} < t \leq 30\text{s} \\
    1/D_p, & \text{if } t > 30\text{s}
\end{cases}
\end{align}

\begin{table}
    \centering
    \caption{DVPP parameters used in the case study}
    \label{tab:dvpp_params}
    \begin{tabular}{@{}lcc@{}}
        \toprule
        \textbf{Parameter} & \textbf{Symbol} & \textbf{Value} \\
        \midrule
        Power rating, wind & $S_{\text{wind,r}}$ & 21.6 MVA \\
        Power rating, PV & $S_{\text{pv,r}}$ & 15 MVA \\
        Power rating, BESS & $S_{\text{bess,r}}$ & 15 MVA \\
        Energy rating, BESS & $E_{\text{bess,r}}$ & 5 MWh \\
        \midrule
        Online power wind & $\overline{p}_{\text{wind}}(\mathcal{H}_l)$ & $S_{\text{wind,r}}\cdot \kappa_{\text{wind}}(\mathcal{H}_l)$ \\
         Online power PV & $\overline{p}_{\text{pv}}(\mathcal{H}_l)$ & $S_{\text{pv,r}}\cdot \kappa_{\text{pv}}(\mathcal{H}_l)$ \\
        \midrule
        Activation signal & $\Delta f$ & $-0.4$ Hz \\
        FFR transfer function & $T_\mathrm{des}^\mathrm{FFR}(s)$ & $\mathrm{Laplace}\{Y^\mathrm{FFR}(t)\}$\\
        $\text{FCR-D}^\text{up}$ transfer f. & $T_\mathrm{des}^\mathrm{FCR-D^{up}}(s)$ & $\mathrm{Laplace}\{Y^\mathrm{FCR-D^{up}}(t)\}$\\
        FFR Penalty region & $\rho^\text{FFR}(t)$ & (\ref{equ:penalty_regions}) \\
        $\text{FCR-D}^\text{up}$ Penalty region & $\rho^{\text{FCR-D}^\text{up}}(t)$ & (\ref{equ:penalty_region_fcrd}) \\
        Plant transfer function & $G_i(s)$ & $1/\left(\tau_i s + 1\right)$ \\
        DER time constants & $\tau_{\text{wind}}, \tau_{\text{pv}}$ & 2s, 1.5s \\
        DER time constants & $\tau_{\text{bess}}, \tau_c$ & 0.1s, 0.081s \\
        \midrule
        Procurement period & $\mathcal{H}_l\in \mathcal{H}$ & hourly \\
        Time period & $\mathcal{H}$ & 2025-04-06 to 2025-04-12 \\
        \#Scenario forecasts & $K$ & 25 \\
        Market service & $MS, \forall \mathcal{H}_l\in T$ & 50\% FFR + 50\% $\text{FCR-D}^\text{up}$ \\
        Penalty price & $q$ & $3\pi$ \cite{FFR_finland} \\
        Minimum capacity & $b_\text{min}$ & 0.1\,MW \cite{FFR_finland} \\
        \bottomrule
    \end{tabular}
    \vspace{-0.4cm}
\end{table}
Revenues from energy fees do not apply \cite{FCR_finland, FFR_finland} and we ignore other insignificant costs (grid fee, O\&M, etc.).

\subsubsection{Historical Data, Forecasts and DER Models}

Let $\mathcal{H}_l\in \mathcal{H}$ denote the discrete analyzed in the cast study.
\href{https://www.meteoblue.com/en/historyplus}{Meteoblue history+} was employed for the archived historical weather data \cite{meteoblue_history_2024}. It is available for the precisely selected Finnish location and deemed to have high precision. As locational accuracy is a factor of interest, NEMSGLOBAL was chosen as a model. Additional to historical weather data, historical \textit{forecast} data is needed to simulate the DVPP operation. \cite{haessig2015energy} outlines that locational forecast errors can be approximated by an AR(1) model, which we adopt:
\begin{equation}
\begin{gathered}
    e_i(\mathcal{H}_l)=\phi_i \cdot e_i(\mathcal{H}_l-1) + \sigma_i(\mathcal{H}_l), \\
    e_i(\mathcal{H}_l):\text{error at time }\mathcal{H}_l, \\
    \phi_i:\text{autocorrelation coefficient,} \\
    \sigma_i(\mathcal{H}_l): \text{random variance at time }\mathcal{H}_l,
\end{gathered}
\end{equation}
where the parameters $\phi_i$ are fitted for Wind and PV power production, respectively. The Root Mean Square Error is 6.5\% and 4.5\% for Wind and PV, respectively. 

All DERs $i\in\mathcal{N}$ were approximated by a first order transfer function $1/\left(\tau_i s + 1\right)$.

For PV power production, the following model is used:
\[
\overline{p}_\text{pv}(\mathcal{H}_l)=C_{PV}\cdot\frac{\mathrm{GHI}(\mathcal{H}_l)}{1000}\cdot\kappa_{thermal}(\mathcal{H}_l)\cdot \kappa_{snow}(\mathcal{H}_l)
\]
where $C_{PV}=15\text{MW}$, $\mathrm{GHI}(\mathcal{H}_l)$ is solar irradiation, $\kappa_{thermal}(\mathcal{H}_l)$ thermal efficiency, and $\kappa_{snow}(\mathcal{H}_l)$ snow cover. The parameters are calculated from the Meteoblue dataset \cite{meteoblue_history_2024}.

For Wind power production, typical parameters of cut in and cut out speed are used, as shown in Figure~\ref{fig:wind}. The BESS was modeled to have a limited energy capacity of 5 MWh.

\begin{figure}[bt]
    \centering
    \includegraphics[width=.8\linewidth]{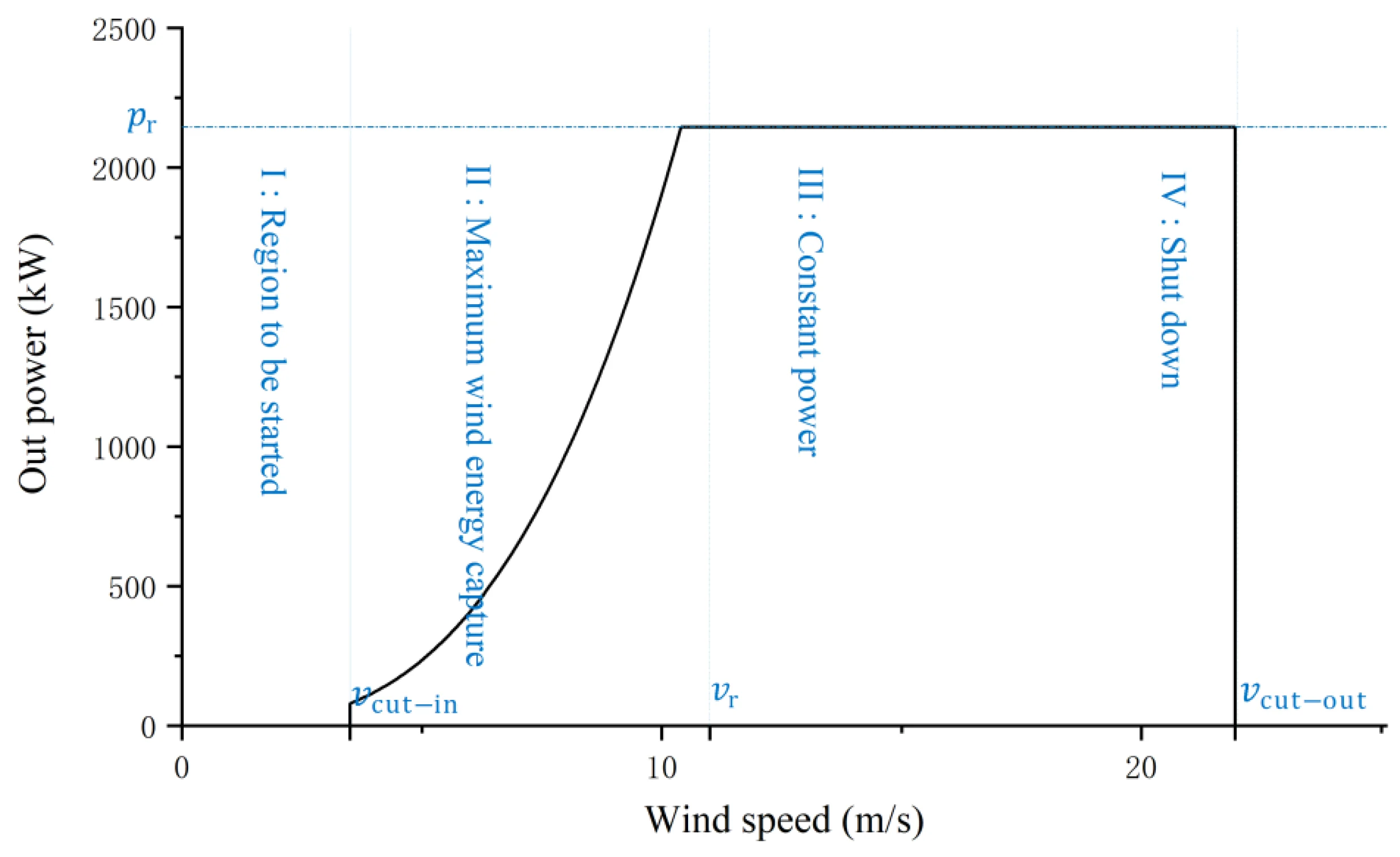}
     \vspace{-.2cm}
    \caption{Wind power production curve used in this work.}
    \label{fig:wind}
        \vspace{-.4cm}
\end{figure}

\bibliographystyle{ieeetr}
\bibliography{references}
\balance

\end{document}

%% file: pics/threeblocks.tex
    \begin{tikzpicture}[
        node distance=0.8cm and 2cm,
        >=Latex,
        font=\footnotesize,
        line/.style={-Latex, thick},
        sum_node/.style={draw, circle, fill=white, inner sep=0pt, minimum size=4mm},
        outerdevice/.style={draw=#1, fill=#1!5, thick, minimum width=6cm, minimum height=2.2cm, rounded corners},
        innerplant/.style={draw=black!70, fill=gray!30, thick, minimum width=2cm, minimum height=0.8cm, align=center, rounded corners=1pt},
        innercontrol/.style={draw=#1!80!black, fill=#1!30, thick, minimum width=2cm, minimum height=0.6cm, rounded corners, align=center},
        dashedbox/.style={draw=gray!60, dashed, rounded corners, thick, inner sep=1.5em},
        branch/.style={fill, shape=circle, minimum size=4pt, inner sep=0pt}
    ]

    \node[input] (inputvec) at (0,0) {};
    
    \coordinate (busStart) at ($(inputvec)+(1.5,0)$);
        
    \draw[thick] (busStart) -- ++(0,-5.07) coordinate (busEnd);

    \node[outerdevice=solarPVColor, right=of busStart, anchor=west] (dev1) {};
    \node[above, inner sep=2pt] at (dev1.north) {$T_1(s)$};

    \node[innerplant] (pre1) at ($(dev1.center)+(-1.3,0.4)$) {$M_\text{PV}(s)\cdot T_\mathrm{des}(s)$};
    \node[innerplant] (plant1) at ($(dev1.center)+(1.3,0.4)$) {PV};
    \node[innercontrol=solarPVColor, below=0.3cm of plant1] (control1) {control};

    \coordinate (splitIn1) at ($(dev1.west) + (0.6, 0)$);
    
    \draw[line] (busStart) |- (pre1.west);
    \draw[line] (pre1.east) |- (plant1.west);

    \coordinate (splitOut1) at ($(plant1.east) + (0.4, 0)$);
    \draw[thick] (plant1.east) -- (splitOut1); 
    \draw[line] (splitOut1) |- (control1.east); 
    \draw[line] (control1.north) -- (plant1.south); 

    \node[sum_node, right=2.5cm of plant1.east] (sum1) {};
    
    \draw[line] (plant1.east) -- node[near end, above] {$\Delta p_1$} (sum1);

    \coordinate (inplabel_start) at ($(plant1.west) + (-6.5, 0)$);
    \coordinate (inplabel_end) at ($(busStart) + (0, .4)$);
    \draw[line] (inplabel_start) -- node[pos=0, above] {$\Delta f$} (inplabel_end);

    \node[outerdevice=WindColor, below=0.5cm of dev1] (dev2) {};
    \node[above, inner sep=2pt] at (dev2.north) {$T_2(s)$};

    \node[innerplant] (pre2) at ($(dev2.center)+(-1.3,0.4)$) {$M_\text{Wind}(s)\cdot T_\mathrm{des}(s)$};
    \node[innerplant] (plant2) at ($(dev2.center)+(1.3,0.4)$) {Wind};
    \node[innercontrol=WindColor, below=0.3cm of plant2] (control2) {control};

    \draw[line] (busStart |- pre2.west) -- (pre2.west);
    \draw[line] (pre2.east) |- (plant2.west);

    \coordinate (splitOut2) at ($(plant2.east) + (0.4, 0)$);
    \draw[line] (splitOut2) |- (control2.east); 
    \draw[line] (control2.north) -- (plant2.south);

    \node[sum_node, right=2.5cm of plant2.east] (sum2) {};

    \draw[line] (plant2.east) -- node[near end, above] {$\Delta p_2$} (sum2);

    \node[outerdevice=BESSColor, below=0.5cm of dev2] (dev3) {};
    \node[above, inner sep=2pt] at (dev3.north) {$T_3(s)$};

    \node[innerplant] (pre3) at ($(dev3.center)+(-1.3,0.4)$) {$M_\text{BESS}(s)\cdot T_\mathrm{des}(s)$};
    \node[innerplant] (plant3) at ($(dev3.center)+(1.3,0.4)$) {BESS};
    \node[innercontrol=BESSColor, below=0.3cm of plant3] (control3) {control};

    \draw[line] (busStart |- pre3.west) -- (pre3.west);
    \draw[line] (pre3.east) |- (plant3.west);

    \coordinate (splitOut3) at ($(plant3.east) + (0.4, 0)$);
    \draw[line] (splitOut3) |- (control3.east);
    \draw[line] (control3.north) -- (plant3.south);

    \node[sum_node, right=2.5cm of plant3.east] (sum3) {};

    \draw[line] (plant3.east) -- node[near end, above] {$\Delta p_3$} (sum3);

    
    \draw[line] (sum3) -- (sum2) node[midway, right] {\tiny +};
    \draw[line] (sum2) -- (sum1) node[midway, right] {\tiny +};
    \node[above=0.03cm of sum1] {+}; 

    \coordinate (aggEnd) at ($(sum1.east)+(1.5,0)$);
    \draw[line] (sum1.east) -- node[midway, above] {$\Delta p_{\text{agg}}$} (aggEnd);

    \begin{scope}[on background layer]
        \node[dashedbox, draw=orgColor!80, fill=orgColor!5, fit=(dev1) (dev3), inner xsep=1em, label={[orgColor!80, font=\small, yshift=1ex]north:set of (controllable) DERs in DVPP}] (setControllable) {};
    \end{scope}

    \node[branch] at (inplabel_end) {};
    
    \node[branch] at (busStart |- pre2.west) {};
    
    \node[branch] at (busStart |- pre3.west) {};

    \end{tikzpicture}